\documentclass[journal]{IEEEtran}
\newcommand{\RNum}[1]{\uppercase\expandafter{\romannumeral #1\relax}}
\usepackage{mathrsfs}
\usepackage{amssymb}
\usepackage[mathcal]{euscript}
\usepackage{diagbox}
\usepackage{cite}
\usepackage{graphicx}
\usepackage{subcaption}
\usepackage[T1]{fontenc}
\usepackage{graphicx}
\usepackage{CJKutf8}
\usepackage{xspace}
\usepackage{makecell}
\usepackage{psfrag}
\usepackage{stfloats}
\usepackage{amsmath}
\usepackage{array}
\usepackage{float}
\usepackage{multirow} 
\usepackage{amsthm}
\usepackage{color}
\usepackage{multicol}
\usepackage{stfloats}
\usepackage{enumerate}
\usepackage[utf8]{inputenc} 
\usepackage[T1]{fontenc}    
\usepackage{booktabs}       
\usepackage{amsfonts}       
\usepackage{nicefrac}       
\usepackage{microtype}      
\usepackage{color}
\usepackage{xcolor}
\usepackage{graphicx}
\usepackage{multirow}
\usepackage{soul}
\usepackage{mathrsfs}
\usepackage{bbm}
\usepackage{amssymb}

\usepackage[mathcal]{euscript}
\usepackage{psfrag,calc,bm}
\usepackage[hyphens]{url}
\usepackage[colorlinks,
            linkcolor=blue,
            anchorcolor=black,
            citecolor=blue
            ]{hyperref}
\usepackage{cite}

\usepackage{graphicx}

\usepackage{psfrag}

\usepackage{stfloats}

\usepackage{amsmath}

\usepackage{float}

\usepackage{algorithmic}
\usepackage[ruled,lined,linesnumbered]{algorithm2e}

\usepackage{color}

\usepackage{boxedminipage}

\usepackage{amsthm}

\usepackage{multirow}

\usepackage{setspace}

\usepackage{array}
\usepackage{wrapfig}
\usepackage{bbm}
\usepackage{mathtools}
\usepackage{dsfont}
\colorlet{soulgray}{lightgray!30}
\sethlcolor{soulgray}
\usepackage{booktabs}  
\newtheorem{remark}{Remark}
\newtheorem{example}{Example}
\newtheorem{theorem}{Theorem}

\allowdisplaybreaks[4]
\usepackage{fancyhdr}
\pagestyle{fancy}
\fancyhf{}
%
\ifCLASSINFOpdf
\else
\fi
%
%

\hyphenation{op-tical net-works semi-conduc-tor}

\setlength{\textfloatsep}{5pt}
\begin{document}
\begin{CJK}{UTF8}{gbsn}
%
\title{Covert Prompt Transmission for Secure Large Language Model Services}


%
	\author{Ruichen Zhang, Yinqiu Liu, Shunpu Tang,  Jiacheng Wang,  Dusit Niyato,~\IEEEmembership{Fellow,~IEEE}, \\Geng Sun, Yonghui Li,~\IEEEmembership{Fellow,~IEEE}, and Sumei Sun,~\IEEEmembership{Fellow,~IEEE}



\thanks{R. Zhang, Y. Liu, S. Tang, J. Wang, and D. Niyato are with the College of Computing and Data Science, Nanyang Technological University, Singapore (e-mail: ruichen.zhang@ntu.edu.sg, yinqiu001@e.ntu.edu.sg, n2409411h@e.ntu.edu.sg,  jiacheng.wang@ntu.edu.sg, dniyato@ntu.edu.sg).}

\thanks{G. Sun is with the College of Computer Science and Technology, Jilin University, Changchun 130012, China (e-mail: sungeng@jlu.edu.cn).}

\thanks{Y. Li is with the School of Electrical and
Information Engineering, University of Sydney, Sydney, NSW 2006, Australia (e-mail: yonghui.li@sydney.edu.au).}

\thanks{S. Sun is with the Institute for Infocomm Research, Agency for Science, Technology and Research, Singapore (e-mail: sunsm@i2r.a-star.edu.sg).}

}
\maketitle
\begin{abstract}
This paper investigates covert prompt transmission for secure and efficient large language model (LLM) services over wireless networks. We formulate a latency minimization problem under fidelity and detectability constraints to ensure confidential and covert communication by jointly optimizing the transmit power and prompt compression ratio. To solve this problem, we first propose a prompt compression and encryption (PCAE) framework, performing surprisal-guided compression followed by lightweight permutation-based encryption. Specifically, PCAE employs a locally deployed small language model (SLM) to estimate token-level surprisal scores, selectively retaining semantically critical tokens while discarding redundant ones. This significantly reduces computational overhead and transmission duration. To further enhance covert wireless transmission, we then develop a group-based proximal policy optimization (GPPO) method that samples multiple candidate actions for each state, selecting the optimal one within each group and incorporating a Kullback-Leibler (KL) divergence penalty to improve policy stability and exploration. Simulation results show that PCAE achieves comparable LLM response fidelity to baseline methods while reducing preprocessing latency by over five orders of magnitude, enabling real-time edge deployment. We further validate PCAE effectiveness across diverse LLM backbones, including DeepSeek-32B, Qwen-32B, and their smaller variants. Moreover, GPPO reduces covert transmission latency by up to 38.6\% compared to existing reinforcement learning strategies, with further analysis showing that increased transmit power provides additional latency benefits.
\end{abstract}

\begin{IEEEkeywords}
Covert transmission, prompt encryption, large language model (LLM), deep reinforcement learning (DRL).
\end{IEEEkeywords}

\section{Introduction}

The rise of generative artificial intelligence (GAI) has led to the rapid development of large language models (LLMs), such as OpenAI’s ChatGPT and DeepSeek’s R1, which demonstrate strong capabilities in text generation, reasoning, and multimodal understanding~\cite{zhang2024generative}. Due to their high computational requirements, these models are typically hosted on centralized cloud servers, while user prompts are transmitted wirelessly from edge devices~\cite{10835069}. This cloud-edge architecture enables real-time services such as virtual assistants and customer support~\cite{10879580}, but also introduces significant security risks. In particular, transmitting prompts over open wireless channels exposes both their content and existence to potential eavesdropping or detection~\cite{10818760}. While prior research has primarily focused on improving LLM inference, the security risks associated with prompt transmission remain largely underexplored. Particularly, securing prompt transmission is crucial because many real-world prompts are extensive or semantically sensitive, including detailed financial transactions, comprehensive medical histories, and full-length document translations. These long prompts inherently increase transmission duration, directly escalating both the risks of content interception and the probability of adversarial detection~\cite{10818760}. Therefore, securing LLM-based wireless communication requires addressing two key challenges:

\textbf{\uppercase\expandafter{\romannumeral1}). Data Privacy and Query Confidentiality}:  
Wireless transmission of user queries to cloud-hosted LLMs inherently introduces privacy vulnerabilities, as adversaries may intercept, analyze, and infer sensitive information from transmitted signals~\cite{hanke2024open}.  
This risk is particularly critical in privacy-sensitive applications, where user prompts may contain personal, financial, or proprietary content, making them prime targets for identity theft or corporate espionage. Therefore, it is essential to apply encryption mechanisms that obfuscate semantic content at the token level, ensuring that even if signals are intercepted, meaningful information cannot be extracted. Such encryption not only secures the query content but also complements physical-layer stealth techniques, providing a foundational layer of protection for confidential LLM services.

\textbf{\uppercase\expandafter{\romannumeral2}). Transmission Security and Detection Avoidance}: Wireless transmission of LLM queries also poses the risk of detection, where adversarial wardens may monitor signal patterns to infer the existence of LLM-related traffic. Even if query content is fully encrypted, adversarial wardens can exploit transmission patterns, power fluctuations, and spectral characteristics to identify, classify, or block LLM communications. This challenge is further exacerbated by advanced detection techniques, such as channel state information (CSI)-based analysis, which can expose transmissions \cite{10633254}. To mitigate this risk, it is essential to ensure that LLM-related transmissions remain covert through adaptive power control, noise injection, and signal obfuscation strategies, making them indistinguishable from background noise.

To address the aforementioned security challenges, two important strategies have been introduced, i.e., prompt encryption and compression to safeguard query confidentiality and transmission signal covert communication to ensure undetectable wireless transmissions.

\textbf{\uppercase\expandafter{\romannumeral1}). LLM Prompt Encryption:} To address data privacy and query confidentiality, prompt encryption and compression have been introduced to transform user queries into an obfuscated and compressed representation, preventing adversaries from extracting meaningful content while preserving interpretability for LLM inference \cite{edemacu2024privacy}. Specifically, encryption techniques such as token-level substitution, structural transformation, and multimodal encoding have been applied to conceal sensitive information while ensuring efficient transmission. Additionally, lossy and lossless cryptographic methods are incorporated to minimize data size while retaining essential semantics. For example, numerical values and domain-specific terms in prompts can be transformed into structured placeholders (e.g., replacing ``The transaction of \$5000 was completed at Bank A'' with ``Transaction Amount at Institution X''), effectively anonymizing critical details without compromising the LLM's ability to generate responses \cite{buban2025encrypted}. 

\textbf{\uppercase\expandafter{\romannumeral2}). Transmission Signal Covert Communication:} To address transmission security and detection avoidance, transmission signal covert communication technique has been introduced to ensure that LLM-related wireless transmissions remain undetectable to adversarial monitoring. This approach prevents unauthorized entities from detecting, intercepting, or inferring the presence of LLM queries over wireless channels \cite{goeckel2015covert}. Specifically, covert communication techniques such as power control, artificial noise injection, and spread spectrum modulation have been employed to embed transmissions within background interference, thereby reducing the probability of detection. Unlike traditional encryption, which secures content but still reveals transmission activity, covert communication conceals the very existence of transmission, preventing adversarial wardens from even recognizing LLM-related communications \cite{an2024covert}. For example, leveraging artificial noise and stochastic power shaping enables transmitted signals to blend seamlessly into environmental noise, making them statistically indistinguishable from background interference.

Motivated by the aforementioned reasons, this work investigates covert prompt transmission for secure LLM services. Specifically, we develop a multi-layered security framework that integrates prompt encryption and compression with covert wireless transmission to protect both query confidentiality and transmission detectability. 
\textit{To the best of the authors' knowledge, this is the first work to introduce a covert prompt transmission method that ensures secure and undetectable LLM-based communications over wireless networks.} The key contributions of this paper are summarized as follows:
 
\begin{itemize}

\item \textbf{Problem Formulation and System Modeling:} 
{We formulate a covert prompt transmission problem that jointly optimizes the prompt compression ratio and transmit power to minimize end-to-end latency, while ensuring LLM response fidelity and maintaining the warden’s total detection error}. To this end, we derive a tractable expression for the detection probability constraint, effectively capturing the covertness requirement under wireless channel uncertainty.

\item \textbf{Prompt Compression and Encryption via PCAE:} To address the challenges of transmission overhead and query confidentiality, we propose a lightweight prompt compression and encryption (PCAE) framework that performs surprisal-based token selection using a local SLM, followed by permutation-based encryption to obfuscate semantic content before transmission.
    
\item \textbf{Covert Policy Optimization via GPPO:} To enable adaptive covert transmission, we design a group-based proximal policy optimization (GPPO) method that jointly optimizes the prompt compression ratio and transmit power. By evaluating multiple candidate actions per state and incorporating Kullback-Leibler (KL)-regularized updates, GPPO achieves stable and efficient policy learning under multi-factor constraints.
\end{itemize}
    
Simulation results show that the proposed PCAE framework and GPPO method achieve superior performance in both LLM response fidelity and covert communication efficiency. The PCAE framework effectively compresses and encrypts prompts, reducing the token length by over 40\% while preserving high response fidelity across a variety of LLM backbones. Compared with existing baselines, PCAE achieves comparable fidelity but reduces preprocessing latency by $10^5$, making it highly suitable for mobile edge deployment. For the covert transmission, the proposed GPPO method consistently outperforms state-of-the-art baselines, achieving significantly lower latency under strict fidelity constraints. Furthermore, GPPO reduces transmission latency by up to 38.6\% when the fidelity threshold $F_{\min}$ is set to 0.82.

\section{Related Work} \label{work}
In this section, we review related work on LLM prompt encryption and wireless covert communication and highlight key research gaps.

\subsection{LLM Prompt Encryption}
LLM prompt encryption aims to safeguard the confidentiality of user queries by transforming them into an obfuscated format that remains interpretable for LLM inference, ensuring privacy while preserving response quality \cite{das2025security}. Several recent studies have explored different encryption techniques to protect LLM queries. For example, Lin et al. \cite{lin2024emojicrypt} proposed EmojiCrypt, a prompt encryption technique that converted sensitive queries into emoji-based representations to prevent unauthorized access while maintaining interpretability. Similarly, Zhang et al. \cite{zhang2024secpe} introduced SecPE, a secure prompt ensembling framework that leveraged fully homomorphic encryption to protect LLM queries while minimizing computational overhead. In addition, Nazzal et al. \cite{nazzal2024promsec} developed PromSec, a prompt optimization framework utilizing generative adversarial graph neural networks to refine queries. Beyond text-based encryption, He et al. \cite{10847705} explored privacy-preserving sampling for latent diffusion models using homomorphic encryption, securing text embeddings. Meanwhile, Liang et al. \cite{liang2024my} investigated prompt leakage risks and proposed inference-time defenses based on prompt perplexity and token translation mechanisms, reducing extraction rates while maintaining model performance.

Despite recent progress, existing studies largely assume static cloud settings and neglect the unique challenges of wireless transmission, such as interception, efficiency constraints, and adversarial detection. Although some works consider prompt optimization in distributed or mobile scenarios, security issues remain insufficiently addressed. For instance, Liu et al. \cite{liu2025intelligent} proposed an intelligent mobile AIGC service that refines prompts via LLMs, but neglected transmission security risks. Likewise, Huang et al. \cite{huang2025joint} developed a joint optimization framework for edge-cloud systems, integrating an attack detector and Bayesian game modeling to enhance prompt security. However, while their approach improves security detection, it does not incorporate prompt encryption mechanisms to protect LLM queries from wireless interception threats.

\subsection{Wireless Covert Communication}
Unlike traditional encryption, covert communication ensures that transmissions blend into background noise, preventing detection and interception \cite{sobers2017covert}. To achieve covert transmission, researchers have explored various techniques. For example, Chen et al. \cite{chen2023covert} performed a comprehensive survey on wireless covert communications, providing insightful analysis about the challenges and solutions of developing covert communications. Wang et al. \cite{10680088} studied intelligent reflecting surface (IRS)-assisted covert communication in UAV air-ground networks, jointly optimizing beamforming and UAV positioning to improve covert transmission rates. Wu et al. \cite{10375580} proposed an IRS-assisted approach that adjusts transmit power, block length, and prior transmission probabilities to maximize the covert rate. Meanwhile, Hosseini et al. \cite{hosseini2023minimizing} investigated a framework minimizing the age of information (AoI) in time-varying channels while ensuring reliability. Yang et al. \cite{10086651} proposed an RS-IRS-NOMA-assisted covert system integrating rate-splitting, IRS, and NOMA to optimize secrecy outage probability. 

\begin{figure*}[tbp!]
  \centering
  \includegraphics[width=0.90\textwidth]{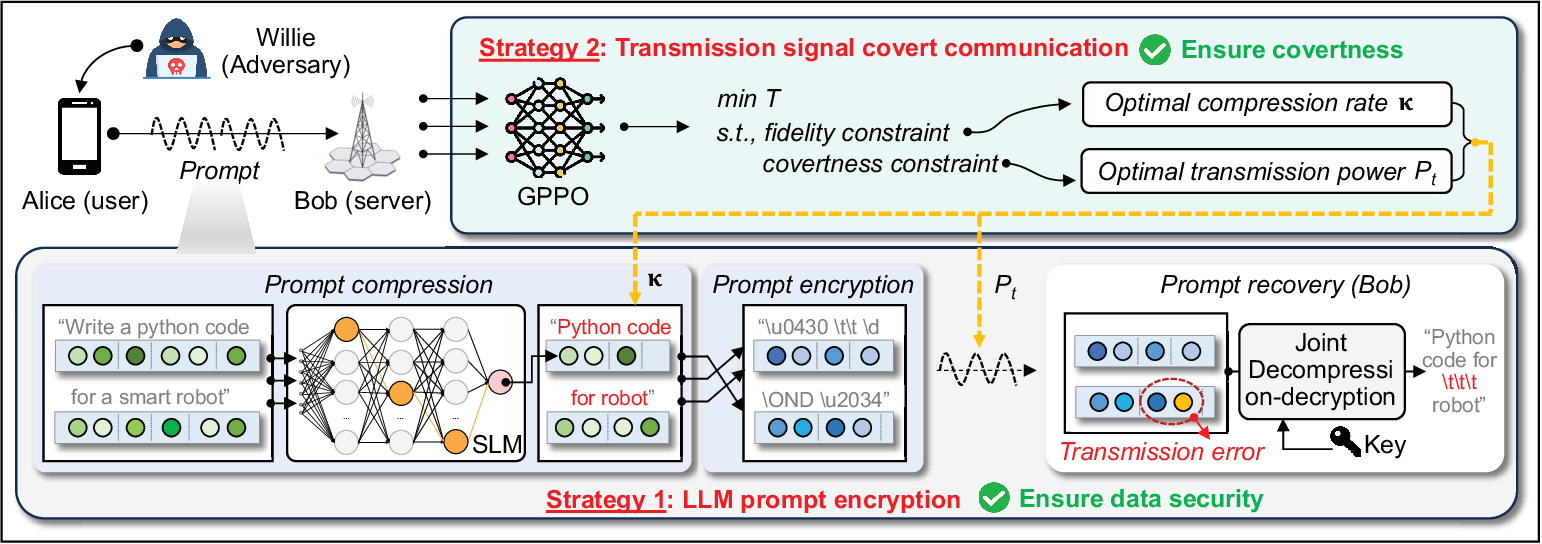} 
  \caption{System model of covert prompt transmission, where Alice compresses and encrypts the input prompt via an SLM before covertly transmitting it to Bob, while evading detection by Willie. We present prompt encryption to protect data security. Additionally, GPPO is trained to optimize compression rate and transmission power, thus realizing the covert transmission.}
  \vspace{-0.3cm}
  \label{System}
\end{figure*}

Despite these contributions, most existing works rely on optimization-based methods under idealized assumptions, which struggle in dynamic wireless environments~\cite{huang2021multi}. To address this, recent studies have adopted deep reinforcement learning (DRL) for more adaptive covert communication~\cite{10032267}. For example,  Li et al. \cite{li2023ddqn} proposed a deep double Q-learning network (DDQN)-based covert communication method for UAV swarms. Similarly, Bi et al. \cite{10122840} employed a time-averaged priority-based DDQN (TAP-DDQN) method in IRS-assisted UAV systems, optimizing UAV 3D trajectory and IRS phase shifts to enhance covert transmission performance. Xu et al. \cite{10681776} leveraged the joint proximal policy optimization method to optimize UAV positioning and power allocation in covert semantic communication, while Hu et al. \cite{10057472} adopted the twin delayed deep deterministic policy gradient method to maximize covert throughput. While existing DRL-based works improve transmission-level covert communication, they neglect LLM-driven services and content-level security, such as prompt encryption. Our work fills the gap by jointly addressing prompt encryption and covert transmission for secure LLM services over wireless networks.

\section{System Model}
As shown in Fig.~\ref{System}, we consider a covert prompt transmission system where a user (i.e., \textit{Alice}) sends queries to a cloud-based LLM (i.e., \textit{Bob}) over a wireless channel, while a passive adversary (i.e., \textit{Willie}) attempts to detect the transmission by distinguishing it from background noise.



\subsection{Prompt Encryption and Compression}
We consider an original user prompt query formulated in natural language, which is first tokenized into a sequence of tokens, i.e.,
\begin{equation}
    \mathbf{x} = [x_1, x_2, \dots, x_L],
\end{equation}
where $x_l$ denotes the $l$-th token, and $L$ is the length of the tokenized prompt sequence.

{In many practical LLM applications, such as document translation or legal contract summarization, the prompt can be significantly long, making direct transmission over bandwidth-limited wireless channels inefficient or infeasible. To address this, Alice performs prompt compression using a lightweight SLM~\cite{lin2024emojicrypt}.} This compression selectively retains the most informative tokens based on their semantic importance, yielding a compressed prompt sequence represented as
\begin{equation}
    \mathbf{x}_{\text{cmp}} = \mathcal{C}(\mathbf{x}; \kappa, \text{SLM}),
\end{equation}
where $\mathcal{C}(\cdot)$ is the SLM-based compression function and $\kappa \in (0,1]$ denotes the compression ratio. To further secure the prompt, an encryption method is applied, and the corresponding encrypted prompt sequence is expressed as 
\begin{equation}
    \mathbf{x}' = \mathcal{E}(\mathbf{x}_{\text{cmp}}; \mathbf{k}),
\end{equation}
where $\mathcal{E}(\cdot)$ represents the encryption function and $\mathbf{k}$ is a pre-shared secret key between Alice and Bob.  The final encrypted prompt is denoted as $\mathbf{x}' = [x'_1, x'_2, \dots, x'_{L'}]$, where $L' = \max(1,\lfloor\kappa L\rfloor)$ denotes the length of the encrypted sequence after compression.

To assess the effect of prompt processing, we denote the LLM responses to the original and encrypted-compressed prompts as $\mathbf{r} = \text{LLM}(\mathbf{x})$ and $\mathbf{r}' = \text{LLM}(\mathbf{x}')$, respectively. Since the transmitted prompt is both compressed and encrypted, preserving semantic fidelity is critical. To quantify this, we adopt BERTScore~\cite{Zhang2020BERTScore} to compute a response fidelity metric\footnote{The fidelity score $F_{\rm t}$ depends on both the compression ratio $\kappa$ and the encryption strategy. Lower values of $\kappa$ (i.e., fewer retained tokens) may lead to the loss of essential semantic content, thereby reducing $F_{\rm t}$.}, which is expressed as
\begin{equation}
    F_{\rm t} = \frac{1}{L_{\rm r}} \sum_{j=1}^{L_{\rm r}} \phi \left( \mathbf{e}(r_j), \mathbf{e}(r'_j) \right),
\end{equation}
where $\mathbf{e}(r_j)$ and $\mathbf{e}(r'_j)$ are the contextual embeddings of the $j$-th token in $\mathbf{r}$ and $\mathbf{r}'$, respectively, and $\phi(\cdot, \cdot)$ denotes cosine similarity. $L_{\rm r}$ is the number of tokens in the original LLM response.



\subsection{Wireless Transmission at Bob}  

The encrypted and compressed prompt sequence $\mathbf{x'}$ is first mapped to modulated symbols via a modulation function $\mathcal{M}(\cdot)$ and then transmitted over a wireless channel. The received signal at Bob is modeled as  
\begin{equation}
    y_{\rm b} = h_{\rm b} \sqrt{P_t} \psi + n_{\rm b},
\end{equation}  
where $\psi \sim \mathcal{CN}(0,1)$ denotes a normalized modulated symbol derived from $\mathbf{x'}$, $P_t$ is the transmit power, $h_{\rm b}$ is the complex channel coefficient between Alice and Bob, and $n_{\rm b} \sim \mathcal{CN}(0, \sigma_{\rm b}^2)$ represents the additive white Gaussian noise (AWGN) with variance $\sigma_{\rm b}^2$.

The wireless channel between Alice and each receiver (i.e., Bob or Willie) consists of large-scale and small-scale fading \cite{10680088}. The channel coefficient for a given receiver $i \in \{\text{b}, \text{w}\}$ is modeled as   $h_{i} = \sqrt{g_{i}} \tilde{h}_{i}$, where $g_{i}$ and $\tilde{h}_{i}$ represent the large-scale fading component and the small-scale fading component, respectively. For large-scale fading, it follows a distance-dependent path loss model, which is given by $g_{i} = \frac{g_0}{d_{i}^2}$, where $g_0$ is the reference channel gain at a unit distance and $d_{i}$ denotes the distance between Alice and receiver $i$. For small-scale fading, it follows a Rician fading model, which accounts for both a dominant line-of-sight (LoS) component and multipath scattering effects, i.e.,
\begin{equation}
    \tilde{h}_{i} = \sqrt{\frac{K}{K + 1}} + \sqrt{\frac{1}{K + 1}} \bar{h}_{i},
\end{equation}
where $K$ is the Rician factor indicating the power ratio between the LoS and scattered paths, and $\bar{h}_{i} \sim \mathcal{CN}(0,1)$ represents the Rayleigh-distributed multipath component.

The signal-to-noise ratio (SNR) at Bob is expressed as
\begin{equation}
    \text{SNR}_{\rm b} = \frac{|h_{\rm b}|^2 P_t}{\sigma_{\rm b}^2}. 
\end{equation}
Thus, the achievable transmission rate at Bob is given by
\begin{equation}
    R_{\rm b} = B \log_2(1 + \text{SNR}_{\rm b}),
\end{equation}
where $B$ denotes the system bandwidth. The total transmission time for the encrypted and compressed prompt sequence is given by
\begin{equation}
    L_{\text{T}} = \frac{L' S}{R_{\rm b}} + T_{\rm Proc},
\end{equation}
where $S$ is the average token size in bits, and $T_{\rm proc}$ denotes the processing time for the prompt compression and encryption via SLM.

\subsection{Binary Hypothesis Testing at Willie}
To detect Alice's transmission activity, Willie conducts a binary hypothesis test based on received signal energy. Under the null hypothesis $H_0$, Alice remains silent, and Willie observes only noise, while under the alternative hypothesis $H_1$, Alice is actively transmitting and Willie receives a superposition of signal and noise \cite{huang2025joint}. Over $T$ channel uses, the received signal at the Willie is expressed as
\begin{equation}
    y_{\rm w}[t] =
    \begin{cases}
        n_{\rm w}[t], & H_0, \\
        \sqrt{P_t} h_{\rm w} x'[t] + n_{\rm w}[t], & H_1,
    \end{cases}
\end{equation}
where \(y_{\rm w}[t]\) denotes the received signal sample at Willie, \(h_{\rm w}\) is the channel coefficient between Alice and Willie,  and \(n_{\rm w}[t] \sim \mathcal{CN}(0, \sigma_{\rm w}^2)\) is AWGN at Willie.

Using an energy detection strategy, the Willie computes the average received signal power over the $T$ samples, i.e.,
\begin{equation}
    T_{\rm w} = \frac{1}{T} \sum_{t=1}^{T} |y_{\rm w}[t]|^2.
\end{equation}
As $T \to \infty$, the strong law of large numbers implies that $T_{\rm w}$ converges to its expected value under each hypothesis:
\begin{equation}
    T_{\rm w} =
    \begin{cases} 
      \sigma_{\rm w}^2, & H_0, \\
      |h_{\rm w}|^2 P_t + \sigma_{\rm w}^2, & H_1.
    \end{cases}
\end{equation}
Willie then determines whether Alice is transmitting by comparing the measured power \(T_{\rm w}\) with a detection threshold \(\tau\), i.e., $T_{\rm w} \underset{H_0}{\overset{H_1}{\gtrless}} \tau$. If $T_{\rm w} > \tau$, the Willie declares $H_1$ (i.e., active transmission); otherwise, it decides $H_0$ (i.e., silence). The detection performance is characterized by the false alarm probability $P_{\rm FA}$ and missed detection probability $P_{\rm MD}$, which are respectively given by
\begin{equation}
P_{\rm FA} = \Pr(T_{\rm w} > \tau \mid H_0) = \Pr(\sigma_{\rm w}^2 > \tau),
\end{equation}
and
\begin{equation}
P_{\rm MD} = \Pr(T_{\rm w} \leq \tau \mid H_1) = \Pr(|h_{\rm w}|^2 P_t + \sigma_{\rm w}^2 \leq \tau).
\end{equation}
By controlling the transmission power \(P_t\) and hence the resulting received power at Willie, Alice can influence \(P_{\rm FA}\) and \(P_{\rm MD}\) to degrade Willie's detection probability and enhance the covertness of communication.

Under practical conditions, Willie may not have perfect knowledge of its noise power due to uncertainty in noise estimation \cite{hosseini2023minimizing}. We model the noise variance $\sigma_{\rm w}^2$ as a uniformly distributed random variable in the range as $\sigma_{\rm w}^2 \in \left[ \frac{\bar{\sigma}_{\rm w}^2}{\mu}, \bar{\sigma}_{\rm w}^2 \mu \right]$, where $\bar{\sigma}_{\rm w}^2$ is the nominal noise power and $\mu \geq 1$ is a known uncertainty parameter. The probability density function (PDF) of $\sigma_{\rm w}^2$ is given by
\begin{equation}
    f_{\sigma_{\rm w}^2}(x) =
    \begin{cases} 
      \frac{1}{2 \ln(\mu) x}, & \frac{\bar{\sigma}_{\rm w}^2}{\mu} \leq x \leq \bar{\sigma}_{\rm w}^2 \mu, \\
      0, & \text{otherwise}.
    \end{cases}
\end{equation}

Using this uncertainty model, the false alarm probability at Willie is given by
\begin{equation}
\label{Eq_detection_V1}
    P_{\rm FA} =
    \begin{cases} 
      1, & \tau < \frac{\bar{\sigma}_{\rm w}^2}{\mu}, \\
      \frac{1}{2 \ln(\mu)} \ln \left( \frac{\bar{\sigma}_{\rm w}^2 \mu}{\tau} \right), & \frac{\bar{\sigma}_{\rm w}^2}{\mu} \leq \tau \leq \bar{\sigma}_{\rm w}^2 \mu, \\
      0, & \tau > \bar{\sigma}_{\rm w}^2 \mu.
    \end{cases}
\end{equation}
Similarly, the missed detection probability is given by (\ref{Eq_detection}).
\begin{figure*}
\begin{equation}
\label{Eq_detection}
    P_{\rm MD} =
    \begin{cases} 
      0, & \tau < |h_{\rm w}|^2 P_t + \frac{\bar{\sigma}_{\rm w}^2}{\mu}, \\
      \frac{1}{2 \ln(\mu)} \ln \left( \frac{\mu(\tau - |h_{\rm w}|^2 P_t)}{\bar{\sigma}_{\rm w}^2} \right), & |h_{\rm w}|^2 P_t + \frac{\bar{\sigma}_{\rm w}^2}{\mu} \leq \tau \leq |h_{\rm w}|^2 P_t + \bar{\sigma}_{\rm w}^2 \mu, \\
      1, & \tau > |h_{\rm w}|^2 P_t + \bar{\sigma}_{\rm w}^2 \mu.
    \end{cases}
\end{equation}
\hrule
\end{figure*}

\begin{theorem}
Considering the log-uniform noise uncertainty model in our work, the optimal detection threshold $\tau^*$ that minimizes Willie's detection performance is given by
\begin{equation}
    \tau^* = |h_{\rm w}|^2 P_t + \frac{\bar{\sigma}_{\rm w}^2}{\mu}.
\end{equation}
At this threshold, the minimum total detection error probability is given by
\begin{equation}
    \xi^* = \frac{1}{2 \ln(\mu)} \ln \left( \frac{\mu \bar{\sigma}_{\rm w}^2}{|h_{\rm w}|^2 P_t + \bar{\sigma}_{\rm w}^2 / \mu} \right).
\end{equation}
\end{theorem}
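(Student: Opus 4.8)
The plan is to view the total detection error $\xi(\tau)=P_{\rm FA}(\tau)+P_{\rm MD}(\tau)$ as a function of the threshold $\tau$ and to minimize it over $\tau$, since Willie selects the threshold to make his detector as reliable as possible (a smaller $\xi$ means better detection). The inputs are the closed-form piecewise expressions in~\eqref{Eq_detection_V1} and~\eqref{Eq_detection}, whose breakpoints are $\bar{\sigma}_{\rm w}^2/\mu$ and $\bar{\sigma}_{\rm w}^2\mu$ for $P_{\rm FA}$, and $|h_{\rm w}|^2P_t+\bar{\sigma}_{\rm w}^2/\mu$ and $|h_{\rm w}|^2P_t+\bar{\sigma}_{\rm w}^2\mu$ for $P_{\rm MD}$. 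First I would restrict attention to the covert regime in which the signal power is small enough that $|h_{\rm w}|^2P_t+\bar{\sigma}_{\rm w}^2/\mu\le\bar{\sigma}_{\rm w}^2\mu$, so the two logarithmic regions overlap; in the opposite regime a gap appears on which $P_{\rm FA}=P_{\rm MD}=0$, Willie detects perfectly, and covertness is lost.

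Next I would partition the $\tau$-axis at the four breakpoints and evaluate $\xi$ on each piece. On $[\bar{\sigma}_{\rm w}^2/\mu,\ |h_{\rm w}|^2P_t+\bar{\sigma}_{\rm w}^2/\mu)$ one has $P_{\rm MD}=0$, so $\xi=P_{\rm FA}=\frac{1}{2\ln\mu}\ln\big(\bar{\sigma}_{\rm w}^2\mu/\tau\big)$, which is strictly decreasing. On $[|h_{\rm w}|^2P_t+\bar{\sigma}_{\rm w}^2/\mu,\ \bar{\sigma}_{\rm w}^2\mu]$ both logarithmic terms are active; multiplying their arguments collapses the $\bar{\sigma}_{\rm w}^2$ factors and the two $\mu$ factors, giving $\xi(\tau)=1+\frac{1}{2\ln\mu}\ln\big(1-|h_{\rm w}|^2P_t/\tau\big)$, which is strictly increasing. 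Thus $\xi$ decreases and then increases, so its unique minimizer is the shared breakpoint $\tau^{*}=|h_{\rm w}|^2P_t+\bar{\sigma}_{\rm w}^2/\mu$, establishing the first claim.

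To obtain the minimum value, I would evaluate $\xi$ at $\tau^{*}$. Because $\tau^{*}$ is the left endpoint of the active region of $P_{\rm MD}$, we have $P_{\rm MD}(\tau^{*})=0$, hence $\xi^{*}=P_{\rm FA}(\tau^{*})=\frac{1}{2\ln\mu}\ln\big(\bar{\sigma}_{\rm w}^2\mu/\tau^{*}\big)$, and substituting $\tau^{*}$ yields the stated $\xi^{*}=\frac{1}{2\ln\mu}\ln\big(\mu\bar{\sigma}_{\rm w}^2/(|h_{\rm w}|^2P_t+\bar{\sigma}_{\rm w}^2/\mu)\big)$. I would close with light consistency checks: the decreasing and increasing pieces agree at $\tau^{*}$ (continuity), and both reduce to $\xi=1$ at the outer breakpoints $\bar{\sigma}_{\rm w}^2/\mu$ and $|h_{\rm w}|^2P_t+\bar{\sigma}_{\rm w}^2\mu$, matching the trivial regions where $\xi\equiv1$; this confirms that no smaller value is attained off the middle interval.

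I expect the main difficulty to be bookkeeping rather than conceptual depth: correctly ordering the four breakpoints (which is exactly what the regime assumption guarantees) and tracking which branch of each piecewise function is active on every subinterval. The only genuine simplification is the log-algebra in the overlap region, where the product $\frac{\bar{\sigma}_{\rm w}^2\mu}{\tau}\cdot\frac{\mu(\tau-|h_{\rm w}|^2P_t)}{\bar{\sigma}_{\rm w}^2}$ must be reduced to $\mu^{2}\big(1-|h_{\rm w}|^2P_t/\tau\big)$ so that the monotonicity—and hence the location of the minimum—becomes transparent.
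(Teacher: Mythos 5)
Your proof is correct and takes essentially the same route as the paper: both minimize the piecewise total error $\xi(\tau)=P_{\rm FA}(\tau)+P_{\rm MD}(\tau)$ over the breakpoint-partitioned $\tau$-axis and identify the minimizer as the shared breakpoint $\tau^{*}=|h_{\rm w}|^2P_t+\bar{\sigma}_{\rm w}^2/\mu$, where $P_{\rm MD}$ vanishes and $\xi^{*}=P_{\rm FA}(\tau^{*})$. If anything, your version is more complete than the paper's, which only asserts the monotone decrease on $\left[\bar{\sigma}_{\rm w}^2/\mu,\ \tau^{*}\right)$; you additionally verify the monotone increase on the overlap region via the log-product simplification $\xi(\tau)=1+\frac{1}{2\ln\mu}\ln\left(1-|h_{\rm w}|^2P_t/\tau\right)$ and make explicit the tacit regime assumption $|h_{\rm w}|^2P_t+\bar{\sigma}_{\rm w}^2/\mu\le\bar{\sigma}_{\rm w}^2\mu$ under which the piecewise expression in \eqref{Eq_Xi} is valid.
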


\begin{proof}
Let $\xi = P_{\rm FA} + P_{\rm MD}$ denote the total detection error probability. Substituting the expressions of $P_{\rm FA}$ and $P_{\rm MD}$, we rewrite $\xi$ as in (\ref{Eq_Xi}).
\begin{figure*}
\begin{equation}
\label{Eq_Xi}
    \xi =
    \begin{cases} 
      1, & \tau < \frac{\bar{\sigma}_{\rm w}^2}{\mu}, \\
      \frac{1}{2 \ln(\mu)} \ln \left( \frac{\bar{\sigma}_{\rm w}^2 \mu}{\tau} \right), & \frac{\bar{\sigma}_{\rm w}^2}{\mu} \leq \tau < |h_{\rm w}|^2 P_t + \frac{\bar{\sigma}_{\rm w}^2}{\mu}, \\
      \frac{1}{2 \ln(\mu)} \ln \left( \frac{\bar{\sigma}_{\rm w}^2 \mu}{\tau} \right) + \frac{1}{2 \ln(\mu)} \ln \left( \frac{\mu(\tau - |h_{\rm w}|^2 P_t)}{\bar{\sigma}_{\rm w}^2} \right), & |h_{\rm w}|^2 P_t + \frac{\bar{\sigma}_{\rm w}^2}{\mu} \leq \tau \leq \bar{\sigma}_{\rm w}^2 \mu, \\
      \frac{1}{2 \ln(\mu)} \ln \left( \frac{\mu(\tau - |h_{\rm w}|^2 P_t)}{\bar{\sigma}_{\rm w}^2} \right), & \bar{\sigma}_{\rm w}^2 \mu < \tau \leq |h_{\rm w}|^2 P_t + \bar{\sigma}_{\rm w}^2 \mu, \\
      1, & \tau > |h_{\rm w}|^2 P_t + \bar{\sigma}_{\rm w}^2 \mu.
    \end{cases}
\end{equation}
\hrule
\end{figure*}
Observing the functional behavior of $\xi$, we note that it monotonically decreases in the range, i.e.,
\begin{equation}
    \frac{\bar{\sigma}_{\rm w}^2}{\mu} \leq \tau < |h_{\rm w}|^2 P_t + \frac{\bar{\sigma}_{\rm w}^2}{\mu}.
\end{equation}
Thus, the optimal detection threshold at Willie that minimizes his detection performance is given by $\tau^* = |h_{\rm w}|^2 P_t + \frac{\bar{\sigma}_{\rm w}^2}{\mu}$. At this threshold, the minimum total error probability is given by
\begin{equation}
    \xi^* = \frac{1}{2 \ln(\mu)} \ln \left( \frac{\mu \bar{\sigma}_{\rm w}^2}{|h_{\rm w}|^2 P_t + \bar{\sigma}_{\rm w}^2 / \mu} \right).
\end{equation}
To ensure covert communication, Alice must control the transmit power such that Willie's minimum detection error probability satisfies as $\xi^* \geq 1 - \epsilon$, which guarantees that Alice's transmission remains covert while maintaining reliable communication with Bob.
\end{proof}

\subsection{Problem Formulation}
We aim to minimize the transmission latency while ensuring reliable prompt reception at Bob and maintaining covertness against Willie\footnote{
While this work focuses on covert prompt transmission from Alice to the LLM server (Bob), we do not consider the reverse direction, i.e., Bob's response to Alice, since it is typically less sensitive to detectability. The downlink response is often transmitted after a noticeable delay and can be inherently embedded within regular background traffic, thereby posing a lower risk of adversarial detection. Similar mechanisms have also been adopted in existing covert communication works, e.g.,~\cite{chen2023covert,goeckel2015covert}.
}. To achieve this, we jointly optimize the compression ratio $\kappa$ and the transmit power $P_t$ under constraints on LLM response fidelity, transmission reliability, and covertness. The optimization problem is formulated as follows:
\begin{subequations}\label{P1}
    \begin{align}
    \min_{\{\kappa, P_t\}} \quad & L_{\text{T}}, \label{P1_a} \\
    \text{s.t.} \quad  
    & F_{\rm t} \geq F_{\min}, \label{P1_b} \\
    & 0 < \kappa \leq 1, \label{P1_c} \\
    & 0 \leq P_t \leq P_{\max}, \label{P1_d} \\
    & R_{\rm b} \geq R_{\min}, \label{P1_e} \\
    & \xi^* \geq 1 - \epsilon. \label{P1_f}
    \end{align}
\end{subequations}
The constraint in~\eqref{P1_b} guarantees that the LLM response fidelity remains above a required threshold $F_{\min}$, while the constraints in~\eqref{P1_c} and~\eqref{P1_d} ensure the compression ratio and transmit power within feasible ranges. The constraint in~\eqref{P1_e} ensures a sufficient transmission rate for reliable prompt decoding at Bob. Lastly, the constraint in~\eqref{P1_f} enforces the covertness condition by requiring the Willie’s total detection error to exceed $1 - \epsilon$.

The Problem~\eqref{P1} is non-convex due to the coupling between $\kappa$ and $P_t$ in both the latency objective and the covert constraint~\eqref{P1_f}, which involves nonlinear functions derived from statistical detection theory. To address this, we propose a unified approach that integrates prompt compression and encryption with covert transmission optimization. Specifically, an SLM-based compression module reduces token overhead while retaining essential semantics, and a GPPO method is employed to learn transmission strategies that minimize latency while satisfying fidelity and covertness requirements.

\section{Proposed PCAE Framework}
This section presents the proposed PCAE framework with SLM-based compression and permutation-based encryption for securing user queries during wireless transmission to cloud-based LLMs. 
\begin{figure}[t]
\centering
\includegraphics[width=0.45\textwidth, height=0.47\textwidth]{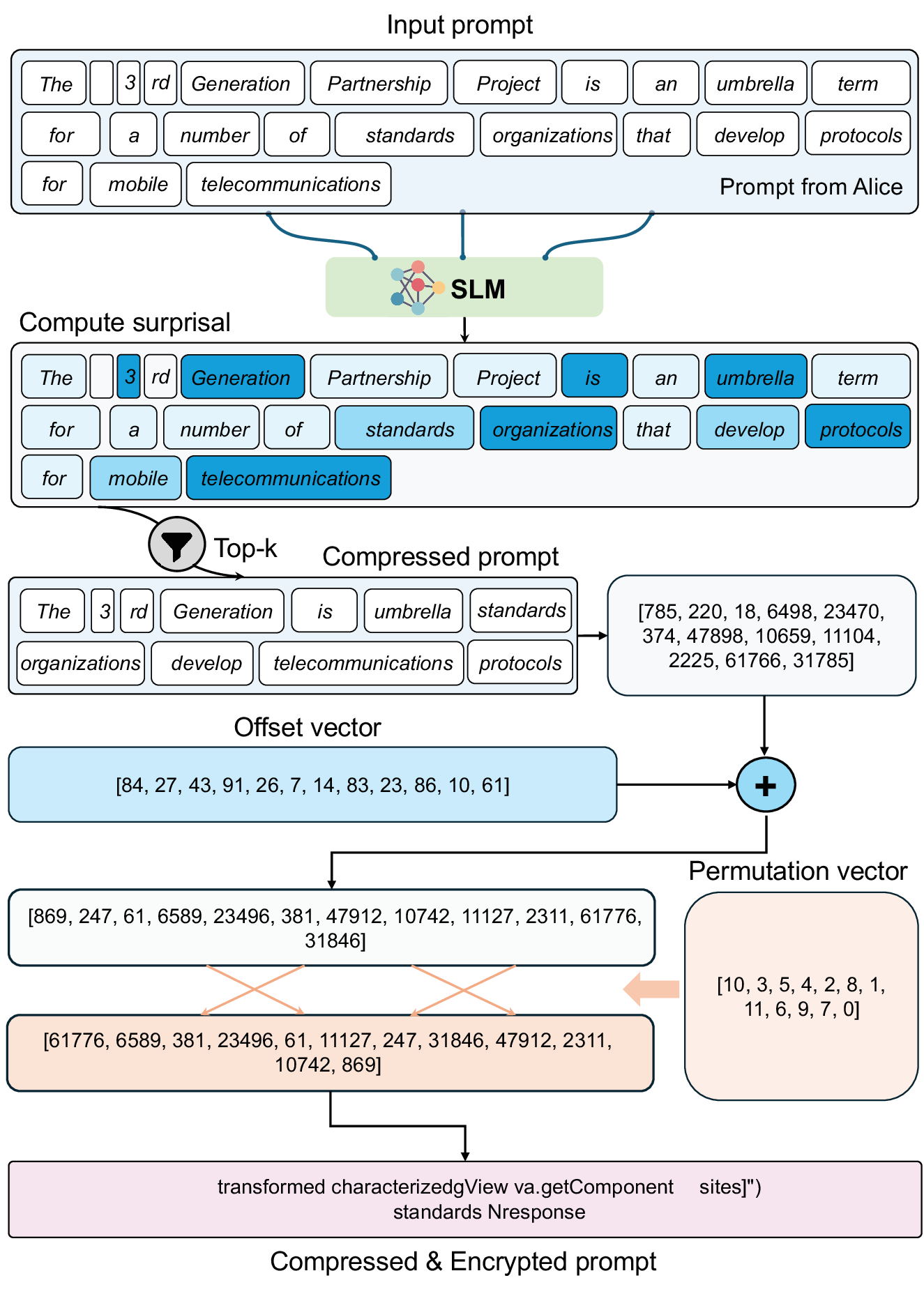}
\caption{Illustration of the PCAE framework, which combines surprisal-based prompt compression and lightweight token-level encryption to enable secure and efficient prompt transmission.}
\label{fig:gppo}
\end{figure}

\subsection{Prompt Compression}

{To reduce transmission overhead and enable covert transmission, we propose a lightweight prompt compression method based on a locally deployed SLM. The goal is to retain semantically critical tokens while discarding redundant content, ensuring that the compressed prompt still enables accurate response generation. By reducing prompt length, the compressed sequence shortens transmission time, thereby decreasing the adversarial Willie's window of detection and enhancing communication covertness.} The detailed design is as follows:  

\textbf{Surprisal-Based Token Scoring:}  
Inspired by Shannon's information theory\cite{shannon1948mathematical}, we initially assess token informativeness using entropy. For a token $x_l$ in the original prompt $\mathbf{x}$, the entropy is defined as
\begin{equation}
    H(x_l) = - \log \rho_{\text{prompt}}(x_l),
\end{equation}
where $\rho_{\text{prompt}}(x_l)$ denotes its empirical probability. However, this measure lacks contextual awareness\cite{oh2023does}. To better capture semantic importance in context, we leverage the autoregressive prediction capability of the SLM and define surprisal as
\begin{equation}
    s(x_l) = - \log p_{\text{SLM}}(x_l \mid x_1, x_2, \dots, x_{l-1}),
    \label{eq:neg_log}
\end{equation}
where $p_{\text{SLM}}(\cdot)$ is the conditional probability predicted by the SLM. A lower surprisal indicates that the token is highly predictable and semantically redundant, while higher surprisal reflects semantic significance and unpredictability.

\textbf{Computing Surprisal from Logits:}  
To compute surprisal, the prompt $\mathbf{x}$ is fed into the SLM to generate a sequence of logits, i.e.,
\begin{equation}
    \mathbf{z} = [\mathbf{z}_1, \mathbf{z}_2, \dots, \mathbf{z}_L] = \text{SLM}(\mathbf{x}),
\end{equation}
where $\mathbf{z}_l \in \mathbb{R}^{|\mathcal{V}|}$ is the unnormalized logit vector for position $l$. Applying softmax over the vocabulary yields
\begin{equation}
    p_{\text{SLM}}(x_l \mid x_1, \dots, x_{l-1}) =
    \frac{\exp(\mathbf{z}_l[x_l])}{\sum_{v \in \mathcal{V}} \exp(\mathbf{z}_l[v])}.
\end{equation}
Surprisal scores are then obtained as
\begin{equation}
    \mathbf{s} = [s(x_2), s(x_3), \dots, s(x_L)],
\end{equation}
excluding $x_1$ due to its undefined history.

\textbf{Token Selection:}  
Given a compression ratio $\kappa \in (0,1]$, the compressed prompt length is
\begin{equation}
    L' = \max(1, \lfloor \kappa L \rfloor).
\end{equation}
To preserve prompt semantics, we adopt a head-tail preservation strategy. Specifically, we retain the first $L_{\text{head}} = \min(N_{\rm h}, L)$ and the last $L_{\text{tail}} = \min(N_{\rm t}, L - L_{\text{head}})$ tokens. The remaining $L_{\text{keep}} = L' - L_{\text{head}} - L_{\text{tail}}$ tokens are selected from the middle segment based on the highest surprisal scores. The final compressed sequence is expressed as
\begin{equation}
    \mathbf{x}_{\text{cmp}} = \left[ x_1, \dots, x_{L_{\text{head}}}, \mathbf{x}_{\text{top}}, x_{L - L_{\text{tail}} + 1}, \dots, x_L \right],
\end{equation}
where $\mathbf{x}_{\text{top}}$ consists of the selected informative tokens, reordered to match their original positions. The compression procedure is summarized in Algorithm~\ref{alg:SLM}.

\begin{remark}
The proposed prompt compression method leverages the parallel nature of Transformer-based SLMs, enabling all surprisal scores to be computed in a single forward pass without iterative decoding \cite{zhang2024adaptive}. This significantly reduces inference latency and computational cost, {thus further shortening the overall prompt processing and transmission duration. Consequently, this prompt compression method enhances covert performance by minimizing the adversarial Willie's opportunity for detection, particularly beneficial for longer prompts.}
\end{remark}


\begin{algorithm}[t!]
    \caption{Prompt Compression with SLM.}
    \label{alg:SLM}
    \KwIn{Original prompt $\mathbf{x} = [x_1, x_2, \ldots, x_L]$, compression ratio $\kappa$, head/tail reserve limits $N_ {\rm h}$, $N_{\rm t}$, and \text{SLM}}
    \KwOut{Compressed prompt $\mathbf{x'}$}
    
    Tokenize $\mathbf{x} \rightarrow \mathbf{id} = [id_1, id_2, \ldots, id_L]$\;
    
    Compute target token count $L' \leftarrow \max(1, \lfloor \kappa L \rfloor)$\;
    
    Obtain logits $\mathbf{z} \leftarrow \texttt{SLM}(\mathbf{id})$\;
    
    Compute surprisal scores $s(x_l) = -\log \rho_{\text{SLM}}(x_l \mid x_1,\ldots,x_{l-1})$\;
    
    Reserve head and tail indices: $L_{\text{head}} = \min(N_{\rm h}, L), \quad L_{\text{tail}} = \min(N_{\rm t}, L - L_{\text{head}})$\;
    
    $\mathcal{I}_{\text{res}} \leftarrow \{1, \ldots, L_{\text{head}}\} \cup \{L - L_{\text{tail}} + 1, \ldots, L\}$\;
    
    $L_{\text{kep}} \leftarrow L' - |\mathcal{I}_{\text{res}}|$\;
    
    Select top-$L_{\text{kep}}$ tokens with highest surprisal:
    $\mathcal{I}_{\text{top}} \leftarrow \text{TopK}(\{s(x_l)\}, L_{\text{kep}}), \; l \notin \mathcal{I}_{\text{res}}$\;
    
    $\mathcal{I}_{\text{final}} \leftarrow \mathcal{I}_{\text{res}} \cup \mathcal{I}_{\text{top}}$\;
    
    Sort $\mathcal{I}_{\text{final}}$ in order and extract tokens $\mathbf{id'}$\;
    
    Decode $\mathbf{id'} \rightarrow \mathbf{x}_{\rm cmp}$\;
    
    \Return $\mathbf{x}_{\rm cmp}$\;
    
    \end{algorithm}

\subsection{Prompt Encryption}
After prompt compression, we propose a token-level encryption method based on offset and permutation operations, which effectively obfuscates the token sequence with minimal computational overhead. The detailed design is as follows:

\textbf{Offset-Based Token Obfuscation:}
Given the compressed prompt $\mathbf{x}_{\text{cmp}}$ and an SLM tokenizer with vocabulary size $V$, we first generate a random offset vector, i.e.,
\begin{equation}
\mathbf{o} = [o_1, o_2, \dots, o_{L}], 
\end{equation}
where each element is uniformly sampled from the integer set, i.e.,
\begin{equation}
    o_i \sim \mathcal{U}(r_{\min}, r_{\max}), \quad \forall i \in \{1, 2, \dots, |\mathcal{V}|\},
\end{equation}
with $r_{\min}$ and $r_{\max}$ denoting the predefined minimum and maximum offset values, respectively. Each token in the compressed sequence $\mathbf{x}_{\text{cmp}}$ is then obfuscated using the corresponding offset, which is expressed as
\begin{equation}
    x'_{\text{off}, l} = (x_{\text{cmp}, l} + o_{x_{\text{cmp}, l}}) \mod |\mathcal{V}|,
\end{equation}
where $x'_{\text{off}, l}$ denotes the $l$-th token of the offset token sequence $\mathbf{x}'_{\text{off}}$.

\textbf{Prompt Encryption with Permutation:}
To further enhance encryption strength, we apply a permutation operation that reorders the token positions in the offset-encrypted prompt~\cite{7295616}. Specifically, we define a permutation map as
\begin{equation}
    \mathcal{P} = [\pi(1), \pi(2), \dots, \pi(L')],
\end{equation}
where $\pi(l) \in \{1, 2, \dots, L'\}$ denotes the new position of the $l$-th token after permutation, and $\mathcal{P}$ is a random permutation of $\{1, \dots, L'\}$. 

The final encrypted prompt $\mathbf{x}' = [x'_1, x'_2, \dots, x'_{L'}]$ is generated by applying $\mathcal{P}$ to the offset-encrypted sequence $\mathbf{x}'_{\text{off}}$, i.e.,
\begin{equation}
    x'_l = x'_{\text{off}, \pi(l)}, \quad \forall l \in \{1, 2, \dots, L'\}.
\end{equation}
The resulting encrypted sequence $\mathbf{x}'$ is then ready for covert wireless transmission. The detailed encryption process is summarized in Algorithm~\ref{alg:encryption}.

\begin{remark}
The proposed encryption method is symmetric and relies on a shared secret key between Alice and Bob, which includes the offset vector $\mathbf{o}$ and permutation map $\mathcal{P}$. These are securely exchanged prior to communication using standard key exchange protocols. Notably, since the encryption process operates on token IDs and preserves their total count, it introduces no additional transmission overhead. The obfuscated sequence remains invertible, allowing Bob to accurately recover the original prompt via inverse operations.
\end{remark}

\begin{algorithm}[t!]
    \caption{Permutation-based Prompt Encryption.}
    \label{alg:encryption}
    \KwIn{Compressed prompt $\mathbf{x}_{\text{cmp}}$, offset range $[r_{\min}, r_{\max}]$, and vocabulary size $V$}
    \KwOut{Encrypted prompt $\mathbf{x'}$}
    
    Generate random offset vector $\mathbf{o} = [o_1, o_2, \ldots, o_{L'}]$\;
    
    Compute offset sequence: $x'_{\text{off}, l} = (x_{\text{cmp}, l} + o_{x_{\text{cmp}, l}}) \mod V$\;
    
    Generate permutation map $\mathcal{P}$\;
    
    Apply permutation: $x'_l = x'_{\text{off}, \mathcal{P}(l)}$\;
    \Return $\mathbf{x'}$\;
\end{algorithm}

\begin{example}
     For the sentence ``Private and Covert Communications'', the tokenized sequence is  [16787, 323, 3539, 1621, 25466], under the tokenizer of Qwen2.5. We set the offset vector as [4, 2, 9, 1, 4], and obtain the offset sequence as [16791, 325, 3548, 1622, 25470].  After applying the permutation map $\mathcal{P} =$ [2, 0,  4, 1,  3], the encrypted token sequence is [3548, 16791, 25470, 325, 1622], which means ``src\_attribute strictlyse dec" after decoding without the shared secret key.
\end{example}

\section{Proposed GPPO Approach} \label{DRL}
After ensuring prompt security via compression and encryption, we next consider covert wireless transmission. To this end, we propose the GPPO method to minimize transmission latency under covert communication constraints.

\subsection{MDP Design}
We model the covert prompt transmission problem as a Markov decision process (MDP), where the key components are defined as follows:

\textbf{State Space:}  
The state space captures key features of the wireless environment, enabling the agent to make informed decisions. Specifically, the state includes the transmission time $L_{\rm T}$, Bob's achievable transmission rate $R_{\rm b}$,  LLM response fidelity $F_{\rm t}$, and the detection error probability at the adversarial Willie $\xi^{*}$. Thus, the state space is formally represented as
\begin{equation}
    \mathcal{S} = \left\{L_{\rm T}, R_{\rm b}, F_{\rm t}, \xi^{*}\right\}.
\end{equation}
Consequently, the dimension of the state space is 4.

\textbf{Action Space:}  
The action space consists of two primary components, i.e., the prompt compression ratio $\kappa$ and the transmit power level $P_t$. For the prompt compression ratio $\kappa$, it is discretized into a finite set of $M$ predefined levels, i.e.,
\begin{equation}
    \mathcal{K} = \{\kappa_1, \kappa_2, \dots, \kappa_M\},
\end{equation}
where each $\kappa_m$ represents a distinct granularity of compression determined by the PCAE framework described in Section~III.

For the transmit power $P_t$, it is continuously adjusted and directly affects transmission quality and security. To ensure $P_t$ remains within the feasible range $0 \leq P_t \leq P_{\max}$, we employ a neural network with a hyperbolic tangent activation function. Thus, the transmit power is computed as
\begin{equation}
 P_t = \frac{P_{\max}}{2}\left(\underbrace{\frac{\exp(x^{\text{pow}}) - \exp(-x^{\text{pow}})}{\exp(x^{\text{pow}}) + \exp(-x^{\text{pow}})}}_{\text{hyperbolic tangent function}} + 1\right),
\end{equation}
where  $x^{\text{pow}}$ is the intermediate output of the neural network's decision layer. Thus, the action space is formally represented as
\begin{equation}
    \mathcal{A} = \left\{\{\kappa_{m}\}, P_{t}\right\}.
\end{equation}
Consequently, the dimension of the action space is $M+1$.

\begin{figure}[t]
\centering
\includegraphics[width=0.49\textwidth]{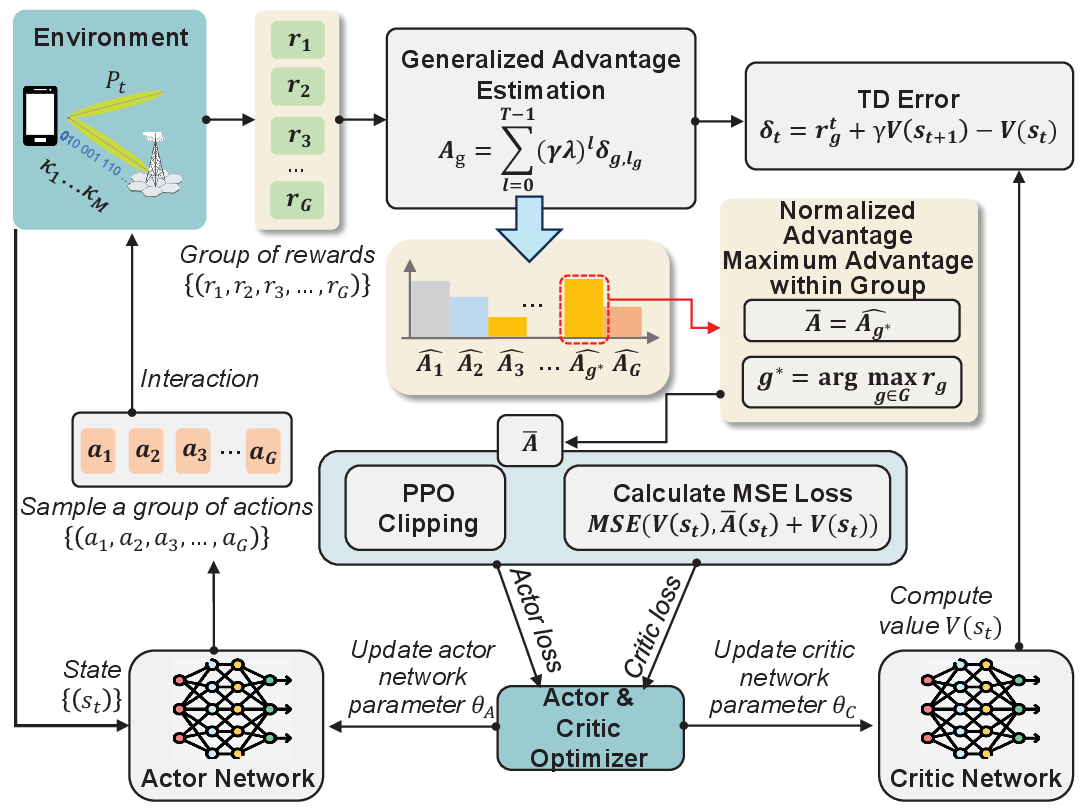}
\caption{Illustration of the proposed GPPO method, which integrates group-wise action sampling and critic-guided advantage estimation for stable and efficient covert transmission policy learning.}
\label{fig:gppo}
\end{figure}

\textbf{Reward Function:}  
The reward function is designed to guide the agent toward minimizing transmission latency while strictly satisfying covert communication constraints. To achieve this, we define the immediate reward as
\begin{equation}
r =
\begin{cases}
\dfrac{1}{L_{\rm T}}, & \text{if } \Lambda_{\rm t} = 0, \\[6pt]
0, & \text{otherwise},
\end{cases}
\end{equation}
where $p_{\rm t}$ denotes the aggregated penalty triggered when any of the key constraints, i.e., LLM response fidelity, transmission rate, or detection error probability, is violated. The penalty term is given by
\begin{equation}
\Lambda_{\rm t} = \frac{
\beta_{\rm r} (R_{\rm min} - R_{\rm b})^{+} +
\beta_{\rm e} (1 - \epsilon - \xi^{*})^{+} +
\beta_{\rm f} (F_{\rm min} - F_{\rm t})^{+}
}{
\beta_{\rm r} + \beta_{\rm e} + \beta_{\rm f}
},
\end{equation}
where $(x)^+ = \max(0, x)$ denotes the hinge function that activates the penalty only when the corresponding constraint is violated. The coefficients $\beta_{\rm r}$, $\beta_{\rm e}$, and $\beta_{\rm f}$ control the relative importance of transmission reliability, detection error margin, and LLM response fidelity, respectively.

\subsection{GPPO Framework}

PPO is a widely used actor-critic reinforcement learning algorithm that ensures stable policy updates by optimizing a clipped surrogate objective~\cite{10032267}. However, its direct application to covert LLM prompt transmission is limited by two factors: (i) PPO samples a single action per state, which may fail to capture complex reward dependencies arising from joint compression, encryption, and transmission; (ii) this single-sample strategy can lead to suboptimal convergence in high-dimensional action spaces. To address these challenges, as shown in Fig.~\ref{fig:gppo}, we propose GPPO, which extends PPO by sampling a group of $G$ candidate actions at each state and selecting the one with the highest reward for policy updates~\cite{shao2024deepseekmath}. This design promotes broader exploration, stabilizes learning in sparse-reward environments, and improves gradient estimation by leveraging the best-performing local decisions.

Let $\boldsymbol\theta_{\rm A}$ and $\boldsymbol\lambda_{\rm C}$ denote the parameters of the actor and critic networks, respectively. The stochastic policy $\pi_{\boldsymbol\theta_{\rm A}}(a|s)$ governs action selection, and $V_{\boldsymbol\lambda_{\rm C}}(s)$ estimates the value of each state.

\textbf{Advantage Estimation via GAE:}
Given a state $s$, GPPO samples a group of $G$ candidate actions $\{a_1, a_2, \ldots, a_G\}$, observing their corresponding immediate rewards and next states. For each action $a_g$, we first compute its temporal difference (TD) error, i.e.,
\begin{equation}
    \delta_{g,l_{g}} = r_g + \gamma V(s'_g) - V(s),
\end{equation}
where $r_g$ is the immediate reward after taking action $a_g$, and $s'_g$ is the resulting next state. The advantage value $A_g$ for each action $a_g$ is then calculated via generalized advantage estimation (GAE) \cite{10945973} as
\begin{equation}
    A_g = \sum_{l=0}^{T-1} (\gamma\lambda)^l \delta_{g,l_{g}},
\end{equation}
where $\gamma$ is the discount factor, $\lambda$ is the GAE decay parameter, and $\delta_{g,l_{g}}$ denotes the TD error at the $l$-th step following action $a_g$. Subsequently, we normalize these advantage estimates across all actions sampled at the current state to reduce variance and stabilize training, which is expressed as
\begin{equation}
   \hat{A}_{g} = \frac{A_g - \text{Mean}(\{A_g\})}{\text{Std}(\{A_g\})}.
\end{equation}
The group-level advantage $\bar{A}$ is then selected based on the action that yields the highest reward. Specifically, we have $\bar{A} = \hat{A}_{g^*}$,  where $g^* = \arg \max_{g\in \{1,\ldots,G\}} r_g$. This update mechanism ensures that policy learning is driven by the best local action, improving efficiency and convergence.

\textbf{Policy Optimization with KL Regularization:} To achieve stable policy updates, we utilize the clipped surrogate objective function derived from PPO, which restricts the magnitude of policy changes and prevents drastic updates, i.e.,
\begin{equation}\label{eq:gppo_clip}
\small
\mathcal{L}_{\rm GPPO}(\boldsymbol\theta_{\rm A}) = \mathbb{E}\left[ 
\min\left(\rho(\boldsymbol\theta_{\rm A}) \bar{A}, 
\text{clip}(\rho(\boldsymbol\theta_{\rm A}), 1 - \epsilon_{\rm{clip}}, 1 + \epsilon_{\rm{clip}}) \bar{A}
\right)
\right],
\end{equation}
where $\rho(\boldsymbol\theta_{\rm A}) = \frac{\pi_{\boldsymbol\theta_{\rm A}}(a|s)}{\pi_{\boldsymbol\theta_{\rm A}^{\rm old}}(a|s)}$ represents the probability ratio between the new and old policies, and $\epsilon$ is the clipping parameter that controls the extent of policy update.

To further enhance stability and promote exploration during training, we incorporate an additional KL divergence penalty term between the updated policy and the previous policy. Thus, the final GPPO objective function is expressed as
\begin{equation}\label{eq:gppo_kl}
\mathcal{L}_{\rm GPPO}^{\rm final}(\boldsymbol\theta_{\rm A}) = \mathcal{L}_{\rm GPPO}(\boldsymbol\theta_{\rm A}) - \beta \cdot D_{\rm KL}\left(\pi_{\boldsymbol\theta_{\rm A}}\|\pi_{\boldsymbol\theta_{\rm A}^{\rm old}}\right),
\end{equation}
where $\beta$ is a tunable hyperparameter controlling the strength of KL regularization. Specifically, the KL divergence is computed as
\begin{equation}
D_{\rm KL}\left(\pi_{\boldsymbol\theta_{\rm A}}\|\pi_{\boldsymbol\theta_{\rm A}^{\rm old}}\right) = \log\frac{\pi_{\boldsymbol\theta_{\rm A}}(a|s)}{\pi_{\boldsymbol\theta_{\rm A}^{\rm old}}(a|s)}.
\end{equation}

\textbf{Critic Network Update:}
The critic network is trained to provide accurate state-value estimations, serving as a baseline to reduce variance during policy optimization. Specifically, the parameters of the critic network $\boldsymbol\lambda_{\rm C}$ are updated by minimizing the mean squared error (MSE) loss between the estimated state-value function and the target value, which is defined as
\begin{equation}\label{eq:critic_mse}
\mathcal{L}_{\rm MSE}(\boldsymbol\lambda_{\rm C}) = \left[ V_{\boldsymbol\lambda_{\rm C}}(s) - V_{\text{target}}(s) \right]^2,
\end{equation}
where $V_{\boldsymbol\lambda_{\rm C}}(s)$ is the critic network's predicted value for state $s$, and $V_{\text{target}}(s)$ represents the target state value computed based on the observed returns. The critic network is optimized via gradient descent to ensure stable training and reliable advantage estimations. The detailed GPPO process in summarized Algorithm~\ref{alg:gppo_simple}.

\begin{algorithm}[t!]
\caption{Group-wise Proximal Policy Optimization (GPPO).}
\label{alg:gppo_simple}
\KwIn{Actor parameters $\boldsymbol\theta_{\rm A}$, critic parameters $\boldsymbol\lambda_{\rm C}$, group size $G$}
\KwOut{Updated $\boldsymbol\theta_{\rm A}$ and $\boldsymbol\lambda_{\rm C}$}

\For{each iteration}{
    \For{each exploration step}{
        Observe state $s$\;
        Sample $G$ candidate actions $\{a_1, \ldots, a_G\} \sim \pi_{\boldsymbol\theta_{\rm A}}(\cdot|s)$\;
        Execute each $a_g$, collect rewards $\{r_1, \ldots, r_G\}$ and next state $s'$\;
        Store $(s, a_{g^{*}}, r_{g^{*}}, s')$ in buffer\;
    }

    Compute advantage estimates $\hat{A}_g$ using GAE for each group\;

    \For{each group in buffer}{
        Identify best action: $g^* = \arg\max_{g} r_g$\;
        Set $\bar{A} = \hat{A}_{g^*}$, $a = a_{g^*}$\;

        Compute clipped surrogate loss and KL penalty as in Eq.~\eqref{eq:gppo_kl}\;

        Update actor parameters $\boldsymbol\theta_{\rm A}$ via gradient ascent\;

        Compute target value $V_{\text{target}}(s)$ and update $\boldsymbol\lambda_{\rm C}$ via MSE loss\;
    }
}
\Return Updated $\boldsymbol\theta_{\rm A}$ and $\boldsymbol\lambda_{\rm C}$
\end{algorithm}

\begin{remark}
Our proposed GPPO method differs fundamentally from the group relative policy optimization (GRPO) method~\cite{shao2024deepseekmath}. GRPO is optimized for large-scale autoregressive LLM environments and omits the critic network, relying solely on reward signals derived from generated text. In contrast, GPPO incorporates a critic network to estimate state values, which is particularly well-suited to our wireless communication setting, where actions directly impact both transmission reliability and covertness. Given the relatively low-dimensional state and action spaces, the critic network introduces negligible computational overhead while providing enhanced training stability. 
\end{remark}

\begin{table}[!t]
\renewcommand{\arraystretch}{1.0}
\setlength{\tabcolsep}{4pt}
\caption{Simulation and Training Parameters}
\label{tab:sim_settings}
\centering
\begin{small}
\begin{tabular}{|l|c|}
\hline
\textbf{Parameter} & \textbf{Value} \\ \hline \hline
Token bit length $S$ & 0.2 \\ \hline
Encryption time $T_{\rm comp}$ & 1 \\ \hline
Fidelity threshold $F_{\min}$ & 0.86 \\ \hline
Detection error threshold $\epsilon$ & 0.05 \\ \hline
Compression levels $M$ & 10 \\ \hline
GAE decay $\lambda$ & 0.97 \\ \hline
Activation function & ReLU \\ \hline
coefficients $\{\beta_{\rm r}, \beta_{\rm e}, \beta_{\rm f} \}$ &  $\{1,1,1\}$\\ \hline
Discount factor $\gamma$ & 0 \\ \hline
Training iterations & $5 \times 10^5$ \\ \hline
\end{tabular}
\end{small}
\end{table}

\subsection{Computational Complexity Analysis}
For GPPO, in each training step, the computational complexity is dominated by the forward and backward passes through the actor and critic neural networks. Assuming $P$ hidden layers and $n_p$ neurons in the $p$-th layer, the computational complexity is $\mathcal{O}(\sum_{p=1}^P n_{p-1} \cdot n_p)$ \cite{10032267}. Since GPPO evaluates $G$ actions per state via group sampling, the overall complexity scales linearly with $G$, yielding a total complexity of $\mathcal{O}(G \cdot \sum_{p=1}^P n_{p-1} \cdot n_p)$.

\section{Simulation} \label{sim}
\begin{figure*}[tbp!]
  \centering
  \includegraphics[width=1\textwidth]{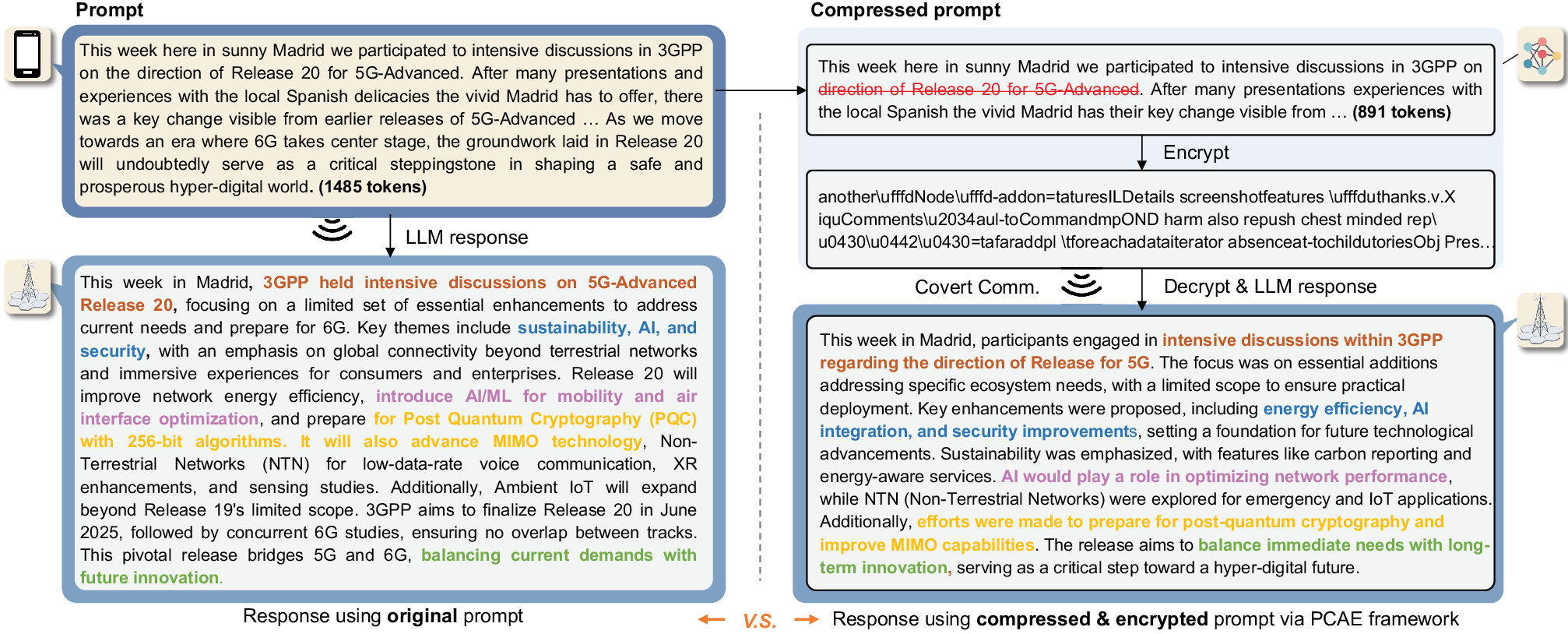} 
  \caption{Comparison between the original prompt and the compressed-and-encrypted prompt generated by the proposed PCAE framework. } 
  \label{covert_comparsion}
\end{figure*}

\subsection{Parameter Settings} 
\subsubsection{System Parameters}
The simulation scenario consists of a covert LLM transmission system with a single-antenna user (i.e., Alice), a cloud-based LLM server (i.e., Bob), and a passive adversarial (i.e., Willie), positioned at coordinates $(0,0)$, $(100,0)$, and $(50,-50)$ meters, respectively. The wireless channel operates over a bandwidth of $B = 10$ MHz. The noise power levels are configured as $\sigma_{\rm b}^2 = 10^{-12}$ dBm at Bob and $\sigma_{\rm w}^2 = 10^{-16}$ dBm at Willie, with a noise uncertainty factor $\mu = 2$. The channel model incorporates small-scale Rician fading with a Rician factor $K = 2$. The transmit power is constrained by $P_{\max} = 38$ dBm, and the covertness is enforced via a detection error threshold $\epsilon = 0.05$.
\subsubsection{PACE and GPPO Parameters}
The PCAE framework is implemented on a workstation with an Intel Xeon Platinum 8380 CPU and an NVIDIA A100 80GB GPU.  For SLM, we use Qwen2.5 \cite{qwen2} with 1.5B to compress the input prompts. We set the number of preserved head and tail tokens to $M=M=15$. For cloud-based LLMs, we adopt Qwen2.5-7B, 14B, and 32B \cite{qwen2}, as well as DeepSeek-7B, 14B, and 32B \cite{guo2025deepseek} to evaluate the generalization of the proposed PCAE  across model scales under downstream summarization tasks. All models are deployed using the vLLM framework\footnote{\url{https://github.com/vllm-project/vllm}} to support efficient inference. We conduct experiments on the MeetingBank-QA-Summary dataset\cite{pan2024llmlingua2}, which consists of real-world meeting transcripts and is widely used as a benchmark for long-document summarization. We also use the latest Nokia blog post about 3GPP\footnote{\url{https://www.nokia.com/blog/completing-5g-advanced-with-3gpp-release-20-and-paving-the-way-to-6g/}} for visualization and demonstration. The GPPO-based method employs a shared actor-critic network architecture, each with an input layer, two hidden layers of 256 neurons, and a linear output layer. The group size is set to $G = 5$, with the PPO clipping parameter $\epsilon_{\text{clip}} = 0.2$ and KL regularization coefficient $\beta = 0.1$ \cite{huang2021multi}. The initial learning rate is $3 \times 10^{-4}$ and decays by a factor of 0.99 per iteration. Other critical parameters are listed in Table~\ref{tab:sim_settings}.

\subsection{Simulation Results}

\subsubsection{Original Prompt versus Compressed-and-Encrypted Prompt via PCAE} 
Fig.~\ref{covert_comparsion} presents a visual comparison between the original prompt and the compressed-and-encrypted prompt produced by our proposed PCAE framework. Taking a representative case from a 3GPP meeting discussion on 5G-Advanced Release 20, we observe that the original prompt comprises 1485 tokens, while the compressed version contains only 891 tokens, indicating a significant reduction in transmission length. This reduction stems from the use of our SLM-based compression method, which retains only semantically informative tokens while discarding redundant ones. Moreover, to ensure covert transmission, the compressed prompt undergoes permutation-based encryption, producing a token sequence that is unrecognizable to adversaries without the shared key. Although the encrypted prompt is unreadable in transit, the receiving LLM can still reconstruct responses that are semantically equivalent to those generated using the original prompt. This highlights the effectiveness of PCAE in preserving the core information and inference quality while enhancing communication security and reducing transmission overhead.

\subsubsection{Impact of LLM Backbone on PCAE Fidelity}
Fig.~\ref{fig:llm_fidelity_compression} illustrates the response fidelity $F_{\rm t}$ of various LLMs under different prompt compression ratios $\kappa$ using the proposed PCAE framework. It shows that models with larger parameter scales generally achieve higher fidelity across all compression settings. For example, DeepSeek-32B and Qwen-32B consistently outperform their 14B and 7B variants, reflecting stronger semantic resilience to information loss during prompt compression. Furthermore, as the compression ratio increases, the fidelity steadily improves, suggesting that retaining more semantic content enables the LLM to generate responses that more accurately reflect the original intent. This observation further supports the design of PCAE, which selectively preserves informative tokens to maintain utility. Moreover, under low compression settings (e.g., $\kappa = 0.4$), the Qwen series exhibits better robustness than DeepSeek, particularly with Qwen-7B significantly outperforming DeepSeek-7B. This indicates that PCAE adapts well across diverse LLM  versions and highlights the role of model-specific generalization in securing and compressing prompt delivery.

\begin{figure}[!t]
\centering
\includegraphics[width=0.45\textwidth]{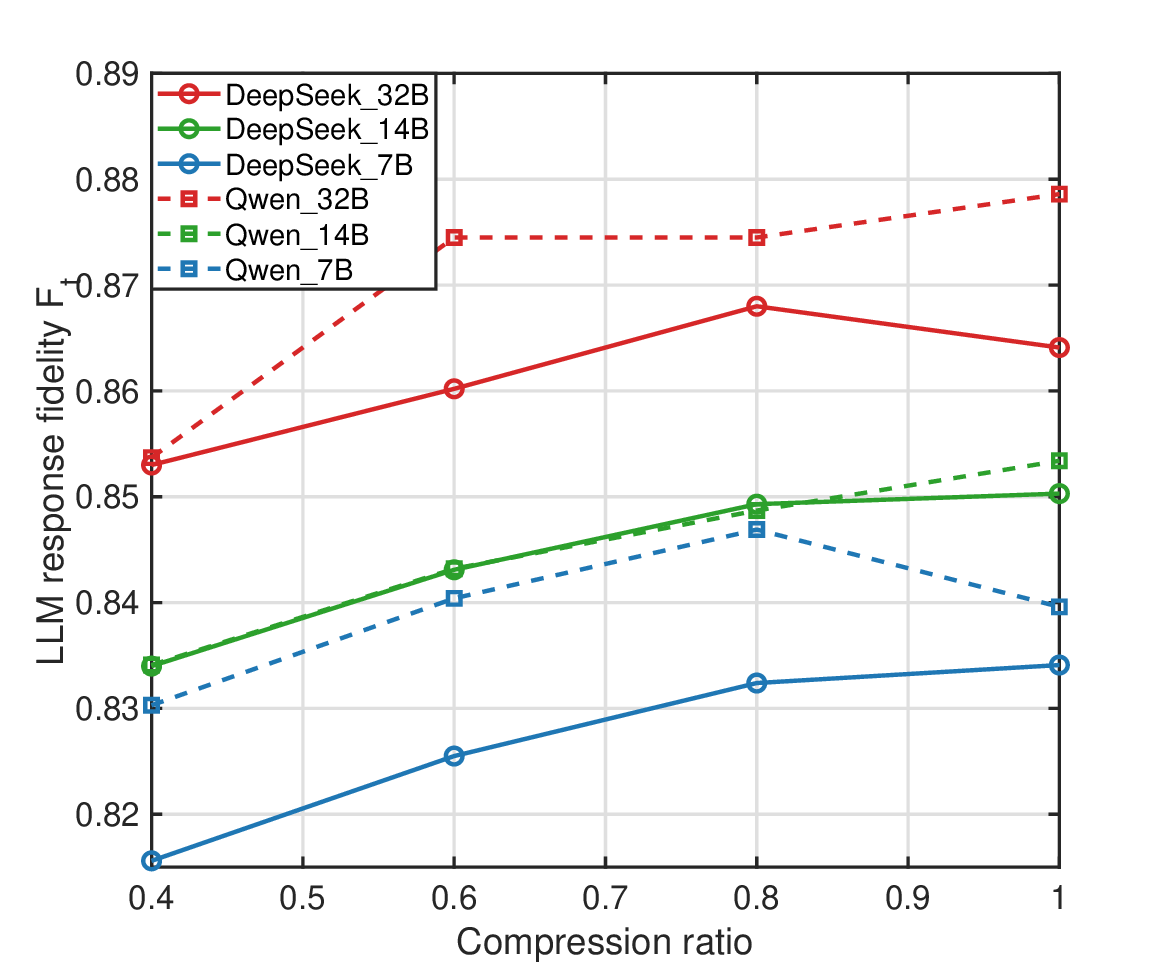}
\caption{LLM response fidelity under different model sizes and compression ratios using PCAE.}
\label{fig:llm_fidelity_compression}
\end{figure}

\begin{figure}[!t]
\centering
\includegraphics[width=0.49\textwidth]{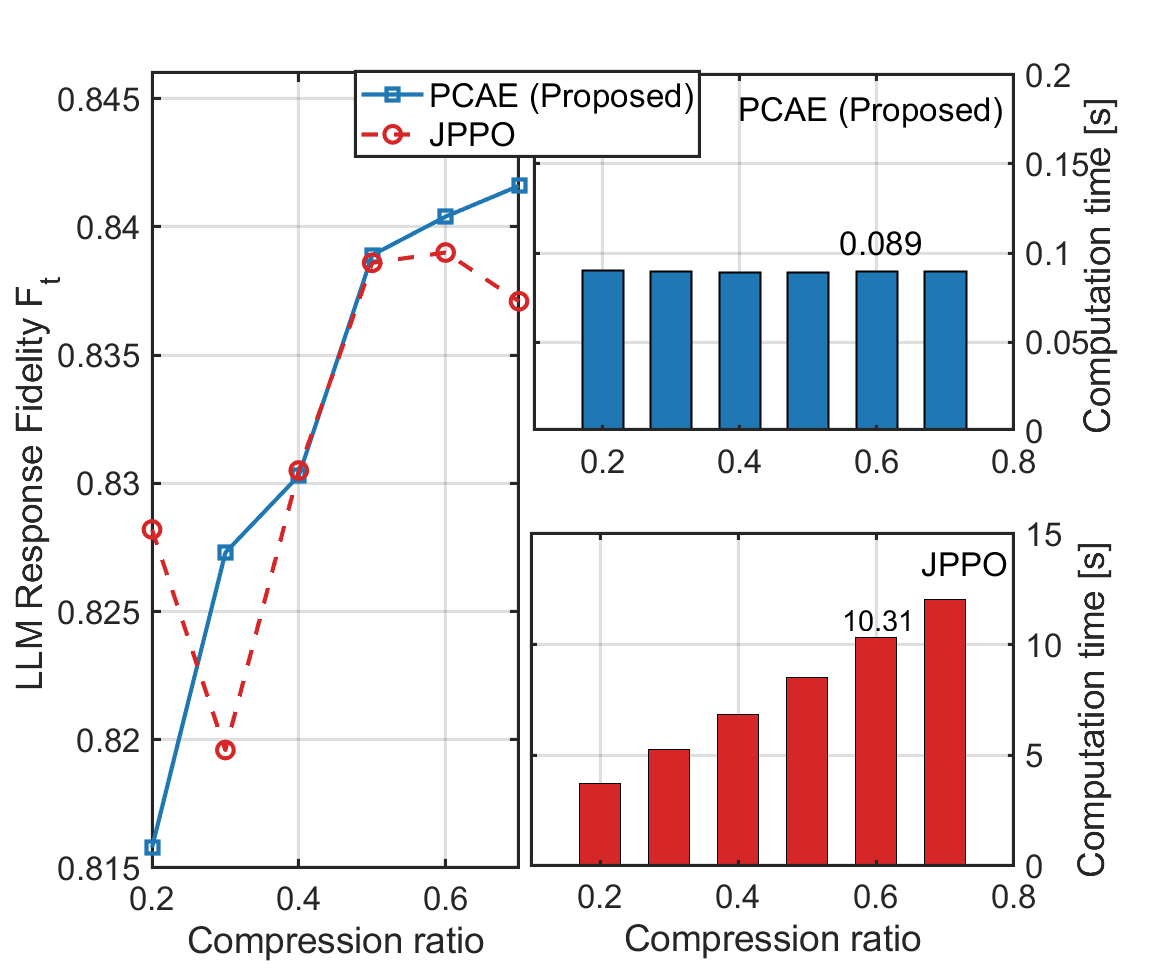}
\caption{Comparison of PCAE and SOTA baseline under a server equipped with NVIDIA A100 80GB GPU.}
\label{fig:gppo_vs_jppo}
\end{figure}

\begin{figure*}[htbp]
\centering
\begin{subfigure}{.24\textwidth}
  \centering
  \includegraphics[width=1.0\linewidth]{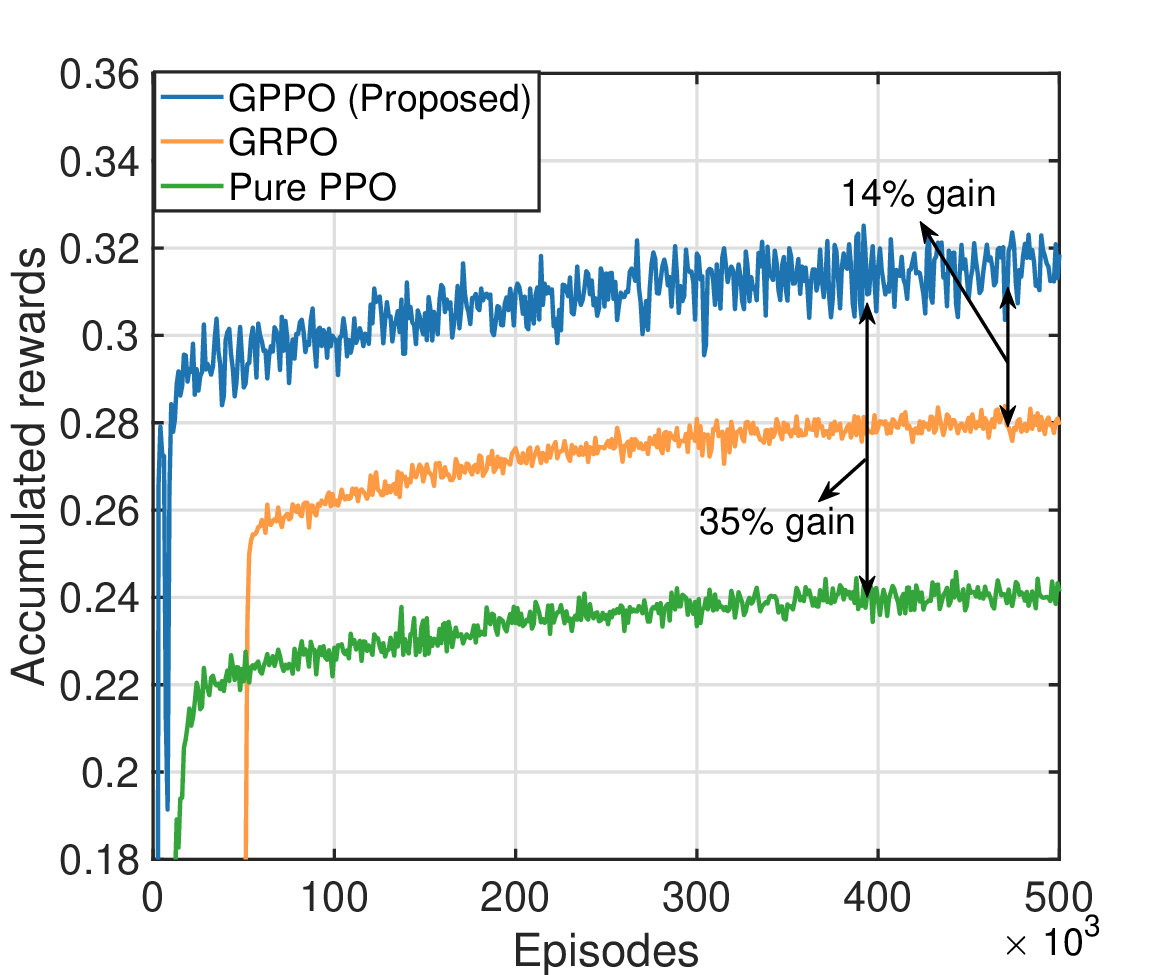} 
  \caption{The Accumulated rewards. }
  \label{fig:sub1}
\end{subfigure}%
\begin{subfigure}{.24\textwidth}
  \centering
  \includegraphics[width=1.0\linewidth]{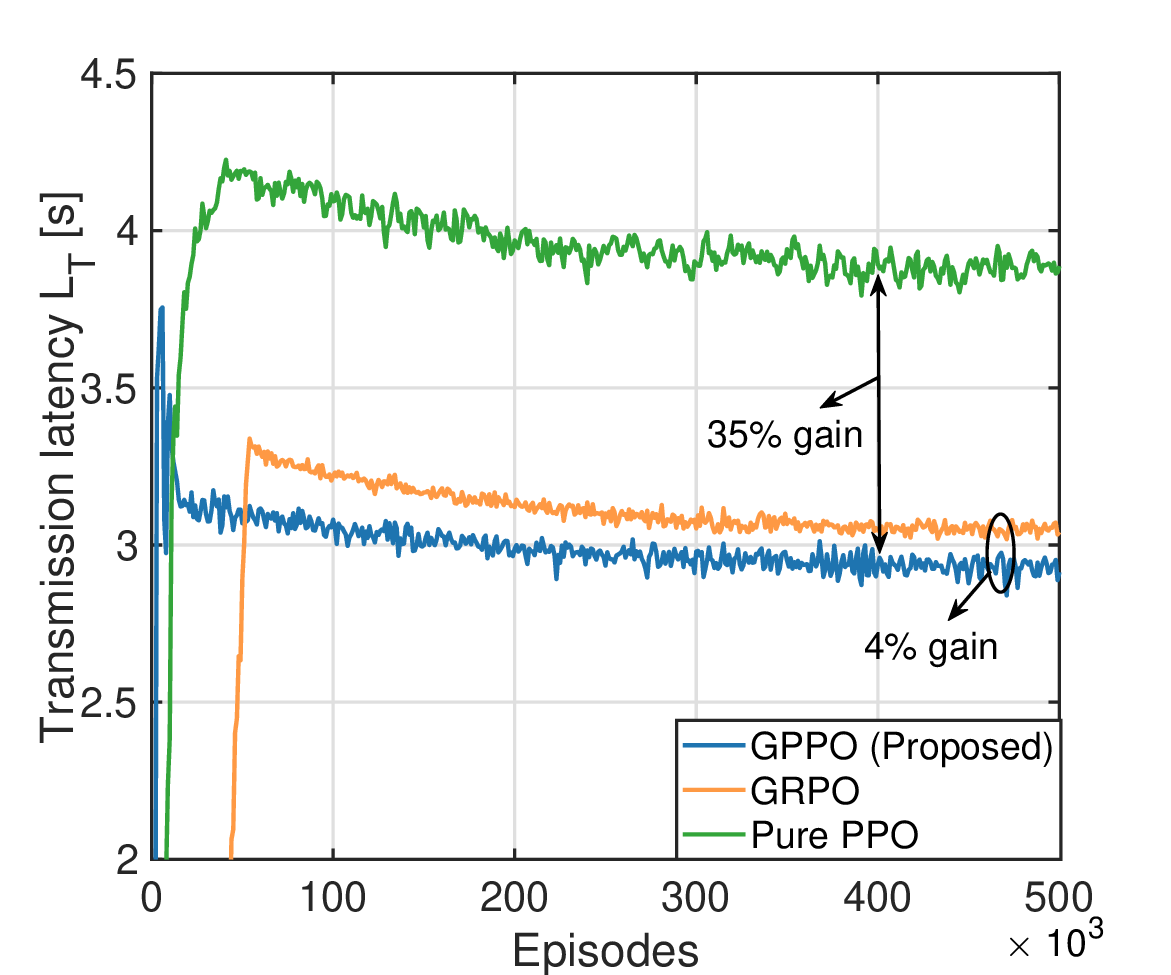}
  \caption{The transmission latency.}
  \label{fig:sub2}
\end{subfigure}
\begin{subfigure}{.24\textwidth}
  \centering
  \includegraphics[width=1.0\linewidth]{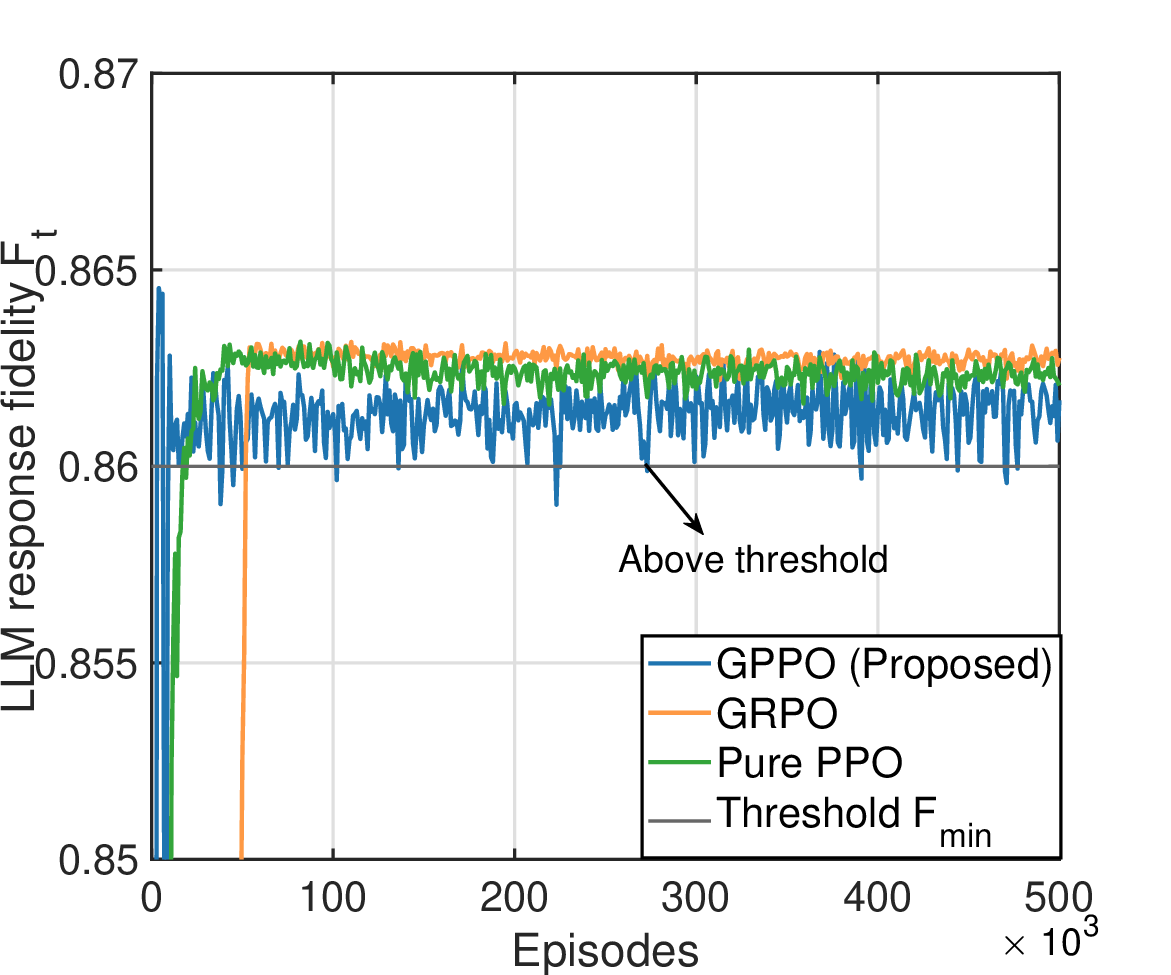}
  \caption{LLM response fidelity.}
  \label{fig:sub3}
\end{subfigure}
\begin{subfigure}{.24\textwidth}
  \centering
  \includegraphics[width=1.0\linewidth]{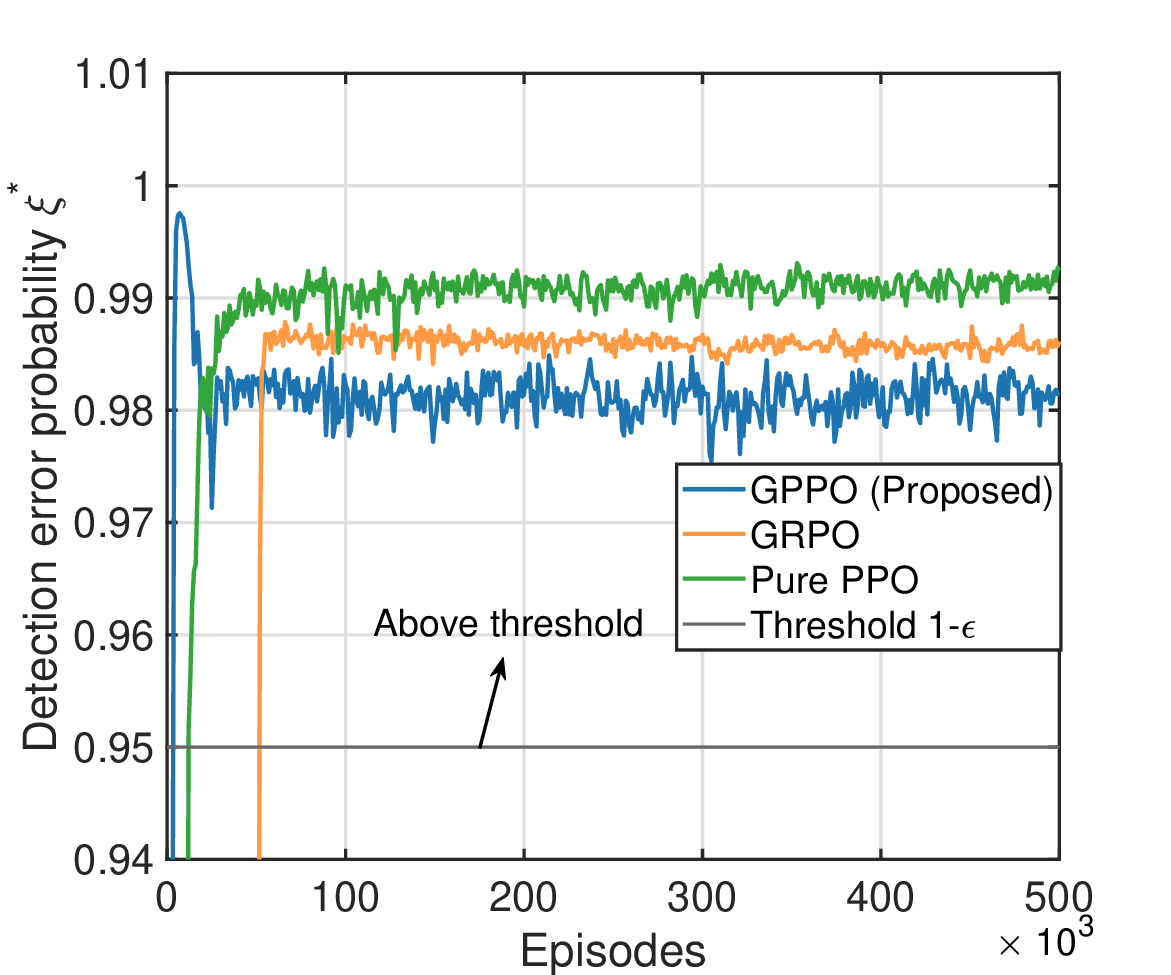}
  \caption{Detection error probability.}
  \label{fig:sub4}
\end{subfigure}%
\caption{Convergence behavior with different methods.}
\label{fig:convergence_gppo}
\end{figure*}

\subsubsection{Performance Comparison Between PCAE and Baseline}
To validate the effectiveness of our proposed PCAE framework, we compare it against a state-of-the-art baseline as follows: 
\begin{itemize}
    \item \textbf{JPPO \cite{you2024jppo}:} A DRL-based prompt compression method that trains an agent to assign token-level importance scores and iteratively selects the most informative tokens within a compression budget.
\end{itemize}
Fig.~\ref{fig:gppo_vs_jppo} provides a comprehensive evaluation of PCAE compared to JPPO in terms of both LLM response fidelity and compression computation time under various compression ratios. As shown in Fig.~\ref{fig:gppo_vs_jppo}(a), PCAE consistently achieves comparable or even better fidelity $F_t$ across all compression levels. Additionally, the fidelity under PCAE increases steadily as the compression ratio grows, which benefits from our surprisal-based token selection strategy that retains semantically important tokens through a lightweight SLM inference process. In contrast, JPPO exhibits less stable performance, especially under higher compression levels, due to its reliance on reinforcement learning signals that are often sparse and delayed. This makes it less effective when handling prompts with diverse lengths and semantic structures. Furthermore, Fig.~\ref{fig:gppo_vs_jppo}(b) and Fig.~\ref{fig:gppo_vs_jppo}(c) illustrate a clear difference in computation time between the two frameworks. PCAE maintains a consistent runtime of approximately 0.089 seconds across all compression settings, owing to its efficient one-pass scoring mechanism and deterministic token filtering. By comparison, JPPO incurs significantly higher latency, with processing time increasing beyond 10 seconds as compression becomes more aggressive. This is attributed to the repeated environment interactions and reward backpropagation required by its policy optimization process. 

\subsubsection{Convergence Behavior of Proposed GPPO}
After ensuring the prompt content is securely compressed and encrypted by PCAE, we turn our attention to optimizing covert prompt transmission using reinforcement learning. To demonstrate the effectiveness of our proposed GPPO method, we compare it against two baseline methods as follows:
\begin{itemize} \item \textbf{Pure PPO~\cite{10032267}:} The standard PPO method, which selects a single action per state and updates the policy accordingly. \item \textbf{GRPO~\cite{shao2024deepseekmath}:} A group relative RL method that samples multiple actions per state and computes relative advantages within each group. \end{itemize} 

\begin{figure}[!t]
\centering
\includegraphics[width=0.49\textwidth]{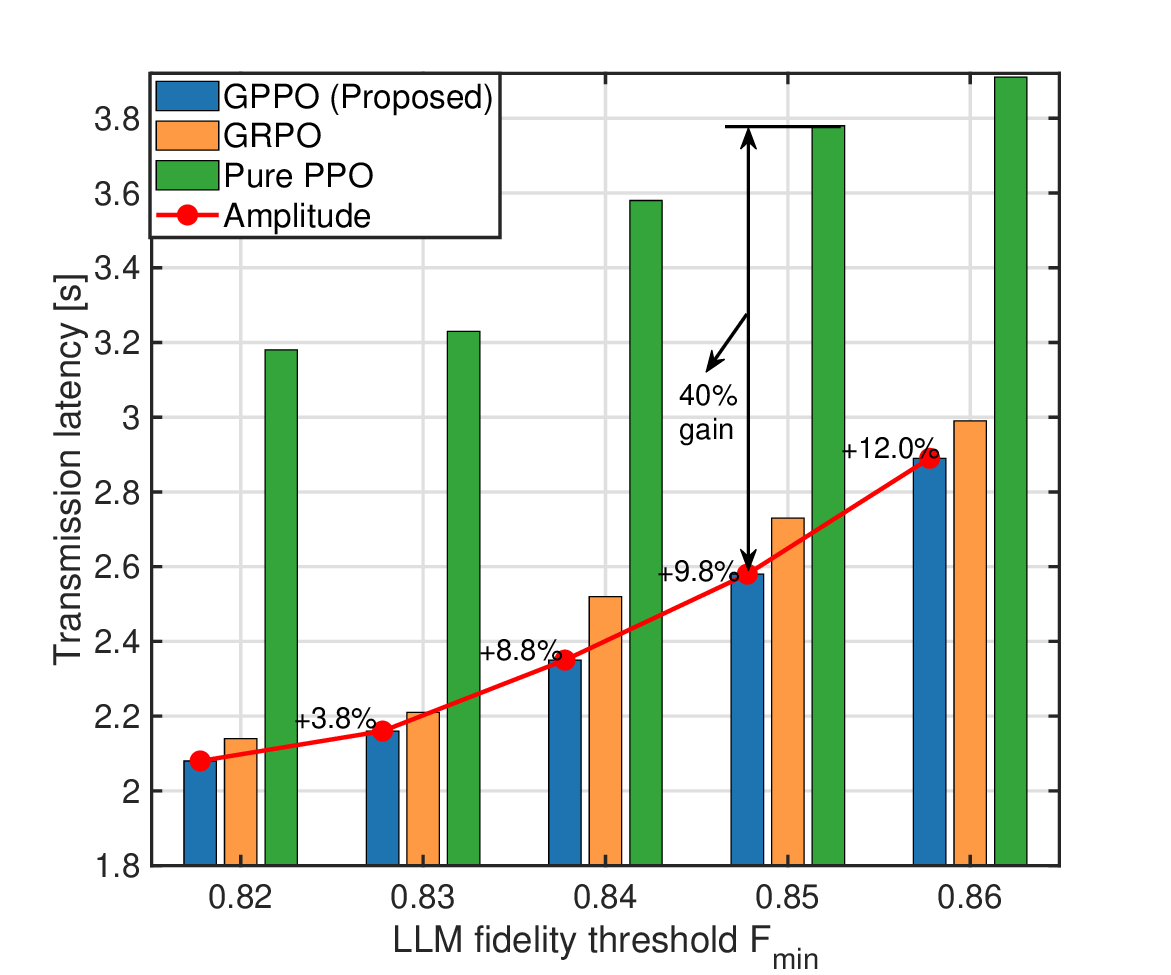}
\caption{Transmission latency with different LLM fidelity value threshold.}
\label{fig:latency_vs_fidelity}
\end{figure}

Fig.~\ref{fig:convergence_gppo} presents the convergence performance of the three methods across four metrics: accumulated reward, transmission latency, LLM response fidelity, and detection error probability. As shown in Fig.~\ref{fig:convergence_gppo}(a), GPPO achieves the highest accumulated reward and exhibits faster convergence compared to the baselines. Specifically, GPPO outperforms GRPO and Pure PPO by approximately 14\% and 35\%, respectively, demonstrating its superior ability to navigate the high-dimensional action space by leveraging group sampling and KL-guided updates. Fig.~\ref{fig:convergence_gppo}(b) shows the evolution of the transmission latency. GPPO maintains the lowest and most stable latency among the three methods, owing to its ability to select compact prompt representations while meeting covert constraints. In contrast, Pure PPO exhibits significantly higher latency due to less effective prompt compression.  Fig.~\ref{fig:convergence_gppo}(c) reports the fidelity $F_t$ of LLM responses over training episodes. All methods maintain fidelity above the acceptable threshold $F_{\min}$, but GPPO achieves comparable fidelity to the baselines while offering additional benefits in reward and latency, highlighting its balanced optimization design.  Finally, Fig.~\ref{fig:convergence_gppo}(d) shows the detection error probability $\xi^{*}$ with respect to the adversary. GPPO consistently satisfies the covert constraint $\xi^{*} \geq 1 - \epsilon$, while the baselines occasionally fall below the threshold due to unstable exploration.

\begin{figure}[!t]
\centering
\includegraphics[width=0.49\textwidth]{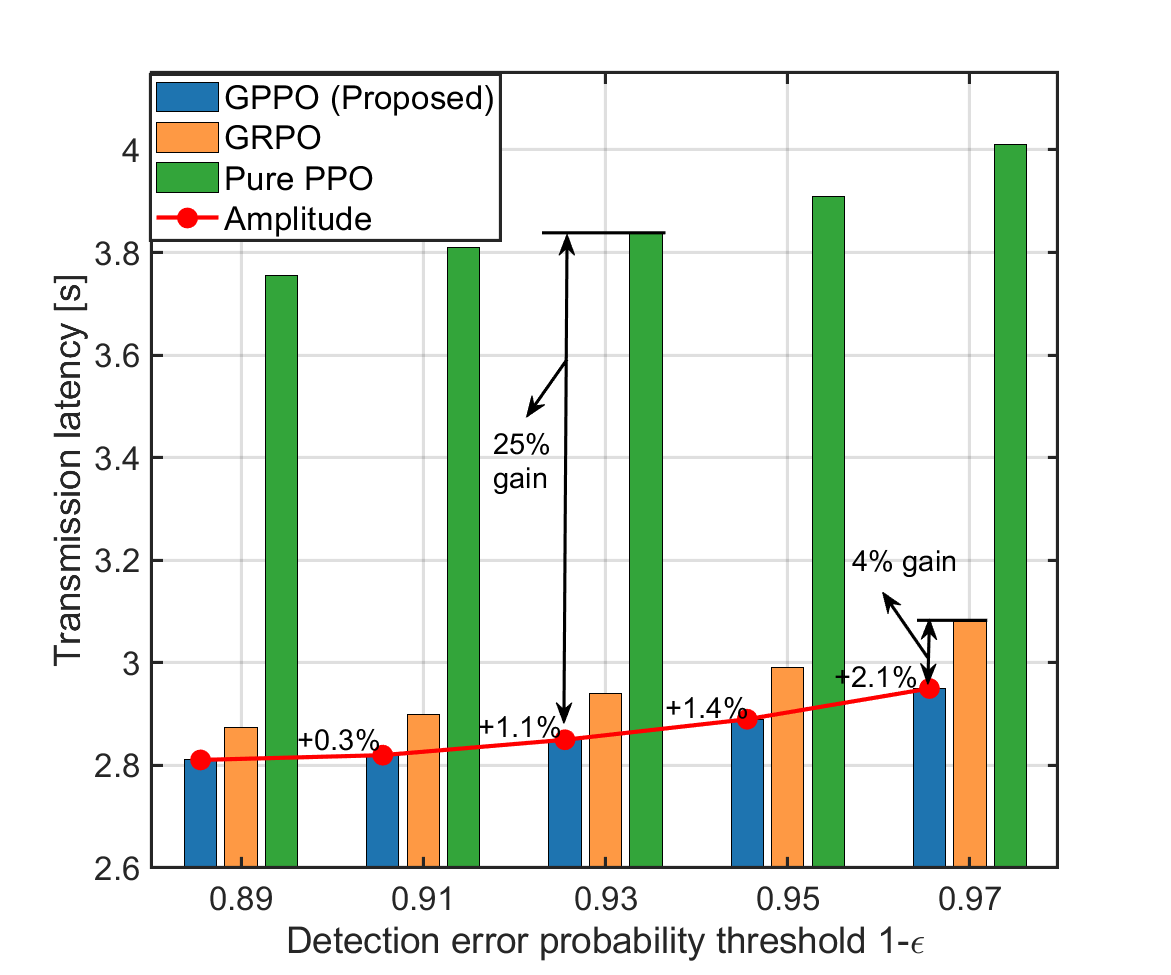}
\caption{Transmission latency with different covert constraint value threshold.}
\label{fig:latency_vs_covertness}
\end{figure}

\begin{figure}[!t]
\centering
\includegraphics[width=0.49\textwidth]{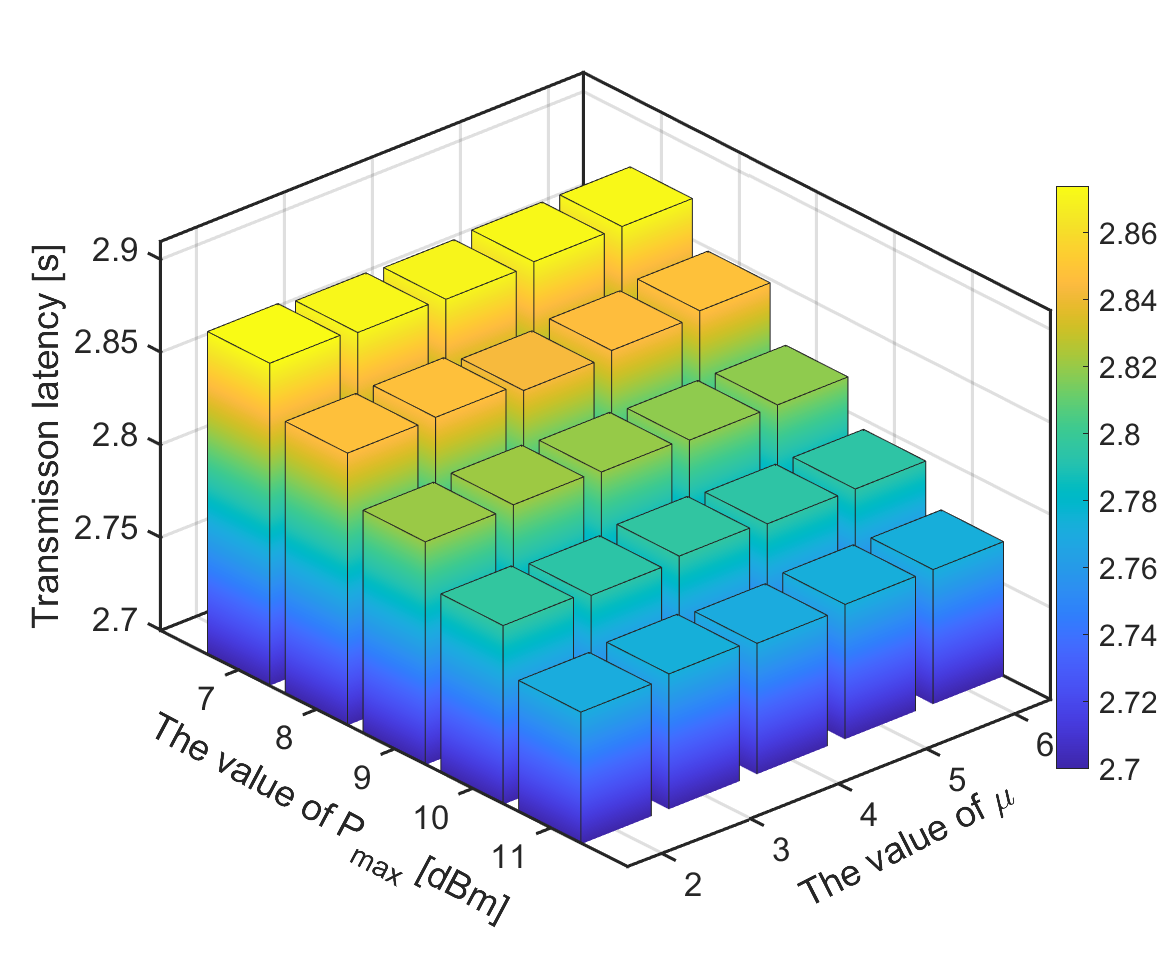}
\caption{Transmission latency with different values of $\mu$ and $P_{\rm max}$.}
\label{fig:3d_latency}
\end{figure}

\subsubsection{Impact of Fidelity Constraints on Transmission Latency}
Fig.~\ref{fig:latency_vs_fidelity} shows the relationship between the LLM fidelity threshold $F_{\min}$ and the resulting transmission latency across different reinforcement learning strategies. As the fidelity constraint becomes stricter (i.e., $F_{\min}$ increases from 0.82 to 0.86), all methods exhibit increasing latency due to the necessity of retaining a greater number of high-importance tokens. This trend is reflected in the upward-sloping bars and the red amplitude curve plotted over GPPO’s latency, indicating the trade-off between quality and transmission cost. Among all methods, the proposed GPPO consistently achieves the lowest latency at each fidelity threshold, showing clear advantages over both GRPO and Pure PPO. For instance, at $F_{\min} = 0.85$, GPPO achieves a 40\% reduction in latency compared to Pure PPO, and a 9.8\% improvement over GRPO. This performance stems from GPPO’s group-based sampling strategy, which allows the agent to evaluate multiple candidate actions and select the most transmission-efficient one from each group. In contrast, GRPO improves upon Pure PPO by incorporating group sampling, but it lacks the KL regularization term adopted in GPPO. The absence of such regularization may cause unstable training or overly conservative updates, limiting its ability to make aggressive yet optimal decisions in high-fidelity regions.

\subsubsection{Impact of Covert Constraints on Transmission Latency}
Fig.~\ref{fig:latency_vs_covertness} presents the transmission latency of different learning strategies under varying covert constraint levels, characterized by the detection error probability threshold $1 - \epsilon$. As the constraint becomes more stringent (i.e., higher $1 - \epsilon$ values), all methods show increasing latency due to reduced transmit power and stricter covertness enforcement. Moreover, the proposed GPPO consistently achieves the lowest latency across all thresholds. This performance arises from GPPO's group-wise sampling mechanism, which enables efficient exploration of action candidates, and the KL divergence regularization, which stabilizes updates and prevents over-conservative policies that may degrade performance. In comparison, Pure PPO lacks both components and suffers from significant latency, especially under high covertness demands, with a gap exceeding 25\% at $1 - \epsilon = 0.97$. GRPO, while incorporating group sampling, still underperforms due to the absence of regularization, particularly in the high-constraint regime. Additionally, the red curve quantifies the latency amplitude of GPPO, which remains within 2.1\%, highlighting its stability under increasing security demands. 

\subsubsection{Joint Influence of Power Budget and Sampling Diversity on GPPO}
Fig.~\ref{fig:3d_latency} illustrates the transmission latency under the proposed GPPO framework as a function of the maximum transmit power $P_{\max}$ and the group size $\mu$. Here, $\mu$ denotes the number of actions sampled from the current policy at each state to form a candidate action group. As observed, increasing either $P_{\max}$ or $\mu$ consistently reduces the overall latency. A higher $P_{\max}$ improves the received signal quality, leading to higher transmission rates and shorter prompt delivery time. Meanwhile, a larger $\mu$ enables the agent to explore a richer action space and select the most efficient option in each step, enhancing policy quality and convergence speed. However, the latency gain saturates as $\mu$ continues to grow, indicating diminishing returns due to increased computational overhead. These results validate the effectiveness of GPPO in exploiting both power allocation and group sampling to achieve low-latency covert communication.

\section{Conclusion} \label{con}
This paper has investigated covert prompt transmission for secure and efficient LLM services over wireless networks. We have proposed a PCAE framework for prompt compression and encryption, and a GPPO-based method for covert transmission. PCAE has leveraged surprisal-based token selection and permutation encryption to reduce overhead while preserving fidelity. GPPO has enabled low-latency transmission under covert constraints by combining group-wise sampling and KL regularization. Simulation results have validated the effectiveness of the proposed approach in improving both security and efficiency compared to existing methods.

\bibliographystyle{IEEEtran}
\bibliography{IEEEabrv,mylib}

\end{CJK}
\end{document}